\documentclass[11pt]{article}
\usepackage{amsmath,amsfonts,amssymb,amsthm}
\usepackage{fullpage}
\usepackage{subfigure} \usepackage{epstopdf}
\usepackage{graphicx}
\usepackage{color}
\usepackage[sans]{dsfont}
\usepackage{mathtools}
\usepackage{hyperref}
\usepackage{cleveref}
\usepackage{tcolorbox}
\usepackage{multirow}
\usepackage[shortlabels]{enumitem}
\usepackage{todonotes}
\usepackage{xspace}
\tcbuselibrary{skins,breakable}
\tcbset{enhanced jigsaw}
\usepackage[font=small,labelfont=bf]{caption}

\usepackage{fullpage}
\usepackage[utf8]{inputenc}
\usepackage[noadjust]{cite}

\usepackage{graphicx}
\usepackage{amsmath,amsthm,amssymb}
\usepackage{algorithm}
\usepackage{algpseudocode}
\usepackage{multirow}
\usepackage{makecell}
\usepackage{algpseudocode}

\usepackage{thm-restate,color,xspace}
\usepackage{comment}
\usepackage{thmtools}
\usepackage{xcolor}
\usepackage{nameref}
\usepackage{array}

\usepackage{todonotes}

\makeatletter
\newcommand{\pushright}[1]{\ifmeasuring@#1\else\omit\hfill$\displaystyle#1$\fi\ignorespaces}
\newcommand{\pushleft}[1]{\ifmeasuring@#1\else\omit$\displaystyle#1$\hfill\fi\ignorespaces}
\makeatother

\setlength{\parskip}{2mm} \setlength{\parindent}{0mm}

\newtheorem{theorem}{Theorem}
\newtheorem{lemm}{Theorem}[section]

\newtheorem{lemma}[theorem]{Lemma}

\newtheorem{conjecture}[theorem]{Conjecture}
\theoremstyle{definition}
\newtheorem{claim}[theorem]{Claim}

\newtheorem{definition}[lemm]{Definition}



\newcommand{\polylog}{\text{\normalfont polylog }}

\newcommand{\eps}{{\varepsilon}}


\newcommand{\dist}{\textnormal{\textsf{dist}}}

\newcommand{\mst}{\texttt{MST}}

\newcounter{note}


\begin{document}

\title{Multiplicative Spanners in Minor-Free Graphs}

\begin{titlepage}
\date{}

\author{Greg Bodwin\thanks{University of Michigan, MI, USA. Email: {\tt bodwin@umich.edu}.} \and Gary Hoppenworth\thanks{University of Michigan, MI, USA. Email: {\tt garytho@umich.edu}.} \and Zihan Tan\thanks{Rutgers University, NJ, USA. Email: {\tt zihantan1993@gmail.com}.} }

\maketitle

\thispagestyle{empty}

\begin{abstract}
In FOCS 2017, Borradaille, Le, and Wulff-Nilsen addressed a long-standing open problem by proving that minor-free graphs have light spanners.
Specifically, they proved that every $K_h$-minor-free graph has a $(1+\eps)$-spanner of lightness $O_{\eps}(h \sqrt{\log h})$, hence constant when $h$ and $\eps$ are regarded as constants.

We extend this result by showing that a more expressive size/stretch tradeoff is available.
Specifically: for any positive integer $k$, every $n$-node, $K_h$-minor-free graph has a $(2k-1)$-spanner with sparsity
$$O\left(h^{\frac{2}{k+1}} \cdot \polylog h\right),$$
and a $(1+\eps)(2k-1)$-spanner with lightness
$$O_{\eps}\left(h^{\frac{2}{k+1}} \cdot \polylog h \right).$$
We further prove that this exponent $\frac{2}{k+1}$ is best possible, assuming the girth conjecture.
At a technical level, our proofs leverage the recent improvements by Postle (2020) to the remarkable \emph{density increment theorem} for minor-free graphs.
\end{abstract}

\end{titlepage}


\setcounter{page}{1}
\section{Introduction}

Given an input graph $G$, a \emph{$k$-spanner} is a sparse subgraph $H$ whose shortest path distances agree with those of $G$ up to a small multiplicative factor $k$, called the \emph{stretch}.
Spanners were introduced by Peleg and Schaffer \cite{PS89} after being used implicitly in prior work in distributed computing \cite{PU87, PU89}, and have since found widespread application in network design, graph algorithms, etc (see survey \cite{survey}).

A primary goal of research on spanners is to optimize the worst-case tradeoff between stretch and \emph{size}.
There are two popular ways to measure the size of a spanner:
\begin{itemize}
\item The first is by its \emph{sparsity}, $|E(H)|/n$ (where $n$ is the number of vertices in the input graph).
This measure arises e.g.\ when the goal is to store the spanner in few bits of memory, or when running algorithms on the spanner whose runtime is sensitive to the spanner sparsity.

\item The second is by its total weight $w(H)$.
Since $w(H)$ can generally be unbounded, it is popular to normalize the spanner weight by the weight of a minimum spanning tree (MST) of the input graph: the lightness of a spanner $H$ of a graph $G$ is the quantity
$$\ell(H \mid G) := \frac{w(H)}{w(\mst(G))}.$$
\end{itemize}

\subsection{Sparsity and Lightness in General Graphs}

The lightness of a graph is always at least as large as its sparsity, and in general it could be much higher.
However, in some sense this difference can be bounded.
Let us restrict attention to $n$-node \emph{metric graphs}, in which every edge is the unique shortest path between its endpoints, which is the natural setting for spanner problems (i.e., non-metric edges can be removed without affecting the distances in $G$).
Here, the bounds are:
\begin{theorem} [Folklore]
Every $n$-node metric graph $G$ has sparsity $O(n)$ and lightness $O(n^2)$.
\end{theorem}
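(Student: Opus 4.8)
The plan is to observe that $G$ is itself a $1$-spanner of $G$, so both claims reduce to bounding the sparsity and the lightness of $G$ directly; I will assume throughout that $G$ is connected, otherwise applying the argument to each connected component separately.

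For the sparsity bound I would simply count edges: $G$ is a simple graph on $n$ vertices, so $|E(G)| \le \binom{n}{2}$, and hence its sparsity is at most $(n-1)/2 = O(n)$. This step uses nothing about the metric property.

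For the lightness bound, the one idea needed is that no single edge can outweigh an entire minimum spanning tree: I would show that $w(e) \le w(\mst(G))$ for every edge $e = (u,v) \in E(G)$. The argument is that $\mst(G)$, being a spanning tree, contains a path $P$ from $u$ to $v$, and $w(P) \le w(\mst(G))$ since $P$ uses only tree edges; this $P$ is also a path of $G$, so $\dist_G(u,v) \le w(P)$; and since $G$ is metric, $e$ is a shortest $u$-$v$ path, i.e.\ $w(e) = \dist_G(u,v)$. Chaining these inequalities yields $w(e) \le w(\mst(G))$. Summing over the at most $\binom{n}{2}$ edges of $G$ then gives $w(G) \le \binom{n}{2} \, w(\mst(G))$, that is, $\ell(G \mid G) \le \binom{n}{2} = O(n^2)$.

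I do not expect any genuine obstacle here — the statement is folklore and the above is essentially the entire proof — so the only point worth flagging is that both bounds are tight. Sparsity $\Theta(n)$ is witnessed by the unweighted complete graph $K_n$. For lightness, I would exhibit $n$ points in the plane in general position (no three collinear, so that with Euclidean weights every edge of $K_n$ is the unique shortest path between its endpoints), partitioned into two clusters of $n/2$ points, each of diameter $\eps$, whose centers lie at unit distance: the MST then has weight $1 + O(n\eps)$, while the $\Theta(n^2)$ cross-cluster edges each have weight $\approx 1$, forcing lightness $\Theta(n^2)$.
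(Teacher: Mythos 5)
Your proof is correct and matches the paper's (very brief) argument: $G$ has at most $\binom{n}{2}$ edges, each of weight at most $w(\mst(G))$ because the metric property makes the edge a shortest path, which is no longer than the MST path between its endpoints. Your tightness example for lightness uses a different construction from the paper's (which augments a unit-weight path with edges of weight equal to path-distance minus $\eps$), but both are valid and tightness is not part of the stated claim anyway.
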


For the upper bound on lightness, note that the graph $G$ has at most $n^2$ edges, each of weight at most $w(\mst(G))$.  For the lower bound, consider an $n$-node path with edges of weight $1$, augmented with all additional edges where the weight of $(u, v)$ is set to their distance along the path minus $\eps$.
Although this bound is tight, it has been known since the seminal work of Khuller, Raghavachari, and Young \cite{KRY95} that one can improve lightness to $O(n)$, matching sparsity, by paying a $(1+\eps)$ factor in stretch:

\begin{theorem} [\cite{KRY95}]
For any $\eps > 0$, every $n$-node graph $G$ has a $(1+\eps)$-spanner with lightness $O_{\eps}(n)$.\footnote{We use $O_{\eps}$ notation to hide polynomial dependencies on $\eps^{-1}$.}
\end{theorem}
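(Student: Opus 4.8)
The plan is to realize the $(1+\eps)$-spanner as a union of $n$ \emph{light approximate shortest-path trees}, one rooted at each vertex. I will establish the following sub-claim: for a connected graph $G$ with positive edge weights and any root $r \in V(G)$, there is a spanning tree $T_r$ of $G$ with
\begin{enumerate}[(i)]
\item $d_{T_r}(r,v) \le (1+\eps)\, d_G(r,v)$ for every $v \in V(G)$, and
\item $w(T_r) = O_{\eps}\!\big(w(\mst(G))\big)$.
\end{enumerate}
Granting this, set $H := \bigcup_{r \in V(G)} T_r$, a connected spanning subgraph of $G$. (If $G$ is disconnected we apply the argument componentwise, and non-metric edges of $G$ play no role and may be ignored.)

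Both properties of $H$ then follow in a couple of lines. For the stretch, fix $u,v \in V(G)$: the tree $T_u \subseteq H$ contains a $u$--$v$ path of length $d_{T_u}(u,v) \le (1+\eps)\, d_G(u,v)$ by (i), so $d_H(u,v) \le (1+\eps)\, d_G(u,v)$, while $d_H(u,v) \ge d_G(u,v)$ holds automatically since $H \subseteq G$; hence $H$ is a $(1+\eps)$-spanner. For the lightness, summing (ii) over the $n$ choices of root,
\[
w(H) \;\le\; \sum_{r \in V(G)} w(T_r) \;\le\; n \cdot O_{\eps}\!\big(w(\mst(G))\big),
\]
so $\ell(H \mid G) = O_{\eps}(n)$, as claimed.

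The substance, and the step I expect to be the main obstacle, is building $T_r$ with the weight bound (ii); property (i) will hold by design. I would construct $T_r$ by interleaving a minimum spanning tree $T$ with a shortest-path tree rooted at $r$: process the vertices in an order derived from a traversal of $T$, maintaining a partial tree $T'$ together with the invariant $d_{T'}(r,x) \le (1+\eps)\, d_G(r,x)$ for all inserted $x$; attach each new vertex to $T'$ along its $T$-edge whenever the invariant survives, and otherwise ``splice in'' the missing prefix of a shortest $r$--$v$ path, which re-establishes the invariant at $v$ with slack. Since $w(T) = w(\mst(G))$, it then suffices to bound the total weight of spliced edges, and this is the heart of the matter: an amortized argument should give $O\!\big(w(\mst(G))/\eps\big)$. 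The intuition is that right after a splice at $v$ the vertex carries a $\Theta(\eps)$-fraction of ``distance slack'', so before some descendant is forced to splice again, $\Omega(\eps)$ times that much fresh MST weight must have been incorporated nearby in $T'$; thus each unit of $w(\mst(G))$ is charged only $O(1/\eps)$ times overall. Pinning down the traversal order --- so that the $T$-edges and shortest-path prefixes we need are available when we reach each vertex --- and the exact potential driving the charging is the delicate part; the remainder is bookkeeping.

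As an alternative one could run a direct path-greedy algorithm: start from $\mst(G)$ and add an edge $e = (u,v)$, in nondecreasing weight order, whenever $d_H(u,v) > (1+\eps)\, w(u,v)$. The $(1+\eps)$ stretch is then immediate from the triangle inequality, but matching the $O_\eps(n)$ lightness requires a more involved scale-by-scale clustering analysis of the added edges, so the shortest-path-tree route above seems the cleaner one to write up.
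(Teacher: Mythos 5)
Your proposal is correct and is essentially the argument of the cited source: the paper offers no proof of this statement (it is quoted from \cite{KRY95}), and \cite{KRY95} is precisely the ``light approximate shortest-path tree'' paper, whose main theorem is your sub-claim (i)--(ii), derived by splicing shortest-path prefixes into an Euler tour of the MST with the $O(1/\eps)$ charging you describe (between consecutive splices the tour must advance by $\Omega(\eps)$ times the spliced root-distance, so the spliced weight totals $O(w(\mst(G))/\eps)$). Taking the union of the $n$ LASTs to get lightness $O_{\eps}(n)$ is the standard corollary, so your write-up matches the intended proof; the only part left as a sketch is exactly the content of the KRY95 theorem, and your outline of it is sound.
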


Once this initial upper bound is established, it is then natural to ask whether one can pay an even higher stretch to improve the $O(n)$ sparsity and lightness dependence.
After a long line of work \cite{ADDJS93, CDNS92, ENS15, CW18, FS20, LS23, Bodwin25, BF25}, especially on the side of lightness, the following tradeoffs is now known:

\begin{theorem} [\cite{ADDJS93, CW18, LS23, Bodwin25}] \label{thm:genspan}
For any $\eps > 0$ and positive integer $k$, every $n$-node graph has a $(2k-1)$-spanner with sparsity $O(n^{1/k})$ and a $(1+\eps)(2k-1)$-spanner with lightness $O_{\eps}(n^{1/k})$.
\end{theorem}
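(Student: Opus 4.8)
This statement bundles two results that I would establish separately: the sparsity bound, via the classical greedy spanner, and the lightness bound, via a weight-scale reduction to the sparsity bound. Throughout we may assume $G$ is a metric graph, since deleting non-metric edges changes neither the distances nor the MST. The sparsity half is short; the lightness half is where essentially all of the difficulty lies.

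\emph{Sparsity.} Run the greedy algorithm \cite{ADDJS93}: process the edges of $G$ in nondecreasing order of weight, maintaining a subgraph $H$, and insert the current edge $e=(u,v)$ into $H$ exactly when $\dist_H(u,v) > (2k-1)\,w(e)$ at that moment. Distances in $H$ only shrink as edges are inserted, so every omitted edge $(u,v)$ satisfies $\dist_H(u,v)\le (2k-1)\,w(u,v)$ in the final $H$, and hence $H$ is a $(2k-1)$-spanner. For sparsity, observe that when $e=(u,v)$ is inserted, every $u$-$v$ path already in $H$ uses edges of weight at most $w(e)$; a path with at most $2k-1$ edges would then have length at most $(2k-1)w(e)$, contradicting the insertion rule. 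Thus inserting $e$ never closes a cycle of at most $2k$ edges, so the final $H$ has (unweighted) girth at least $2k+1$, and the Moore bound gives $|E(H)|=O(n^{1+1/k})$, i.e.\ sparsity $O(n^{1/k})$.

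\emph{Lightness.} Fix a minimum spanning tree $T$ of $G$ with $w(T)=W$ and always keep $T$ in the spanner; it remains to add edges of total weight $O_\eps(n^{1/k})\cdot W$ so that every edge of $G$ is $(1+\eps)(2k-1)$-spanned. Partition the non-tree edges into geometric classes $E_j=\set{e : w(e)\in(2^{j-1},2^j]}$; since $G$ is metric no edge is heavier than $W$, and after putting all edges of weight below $\eps W/n^2$ into the spanner wholesale (total weight at most $\eps W$) only $O(\log n)$ classes remain. For each surviving scale $j$, use a standard tree-partitioning lemma to split $T$ into $O(W/(\eps 2^j))$ connected \emph{clusters}, each of $T$-diameter $O(\eps 2^j)$; write $N_j$ for the number of clusters. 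Contract each cluster to a single node, forming a weighted graph $G_j$ on $N_j$ nodes whose edges are the scale-$j$ $G$-edges between distinct clusters (one representative per pair), and apply the sparsity bound just proved to $G_j$: it produces a set $S_j$ of $O(N_j^{1+1/k})$ scale-$j$ edges that $(2k-1)$-spans $G_j$. Add every $S_j$ to the spanner. Since each cluster has $T$-diameter $O(\eps 2^j)=O(\eps)\cdot w(e)$ for $e\in E_j$, concatenating the $S_j$-path between the clusters of $u$ and $v$ with short $T$-detours inside each visited cluster shows $\dist_{T\cup S_j}(u,v)\le (2k-1)(1+O(\eps))\,w(e)$, so rescaling $\eps$ gives stretch $(1+\eps)(2k-1)$. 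For the weight, $w(S_j)\le 2^j\cdot O(N_j^{1+1/k})\le 2^j\cdot O(N_j)\cdot n^{1/k}=O(n^{1/k})\cdot N_j 2^j=O_\eps(n^{1/k})\cdot W$, and summing over the $O(\log n)$ scales gives total spanner weight $O_\eps(n^{1/k}\log n)\cdot W$.

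\emph{Main obstacle.} The reduction above loses an extra $\log n$ factor, and shaving it off to reach the stated $O_\eps(n^{1/k})$ is precisely the content of the line of work \cite{ENS15, CW18, LS23, Bodwin25}. The point is to avoid paying the full price $O(N_j^{1+1/k})\cdot 2^j$ at each scale independently: one arranges the clusterings at different scales into a laminar hierarchy and charges most scale-$j$ spanner edges either against MST weight or against spanner edges at neighboring scales, in an amortized way --- e.g.\ via Chechik--Wulff-Nilsen's $(1+\eps)$-stretch credit scheme \cite{CW18} or Le--Solomon's cluster-graph framework \cite{LS23} --- so that the per-scale contributions telescope. I expect this amortization to be the genuinely hard step; the sparse $(2k-1)$-spanner enters only as a black box inside it, which is why the final dependence on $n$ is exactly the sparsity exponent $1/k$.
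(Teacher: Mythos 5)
Your sparsity half is correct and complete: greedy insertion in nondecreasing weight order, the observation that an inserted edge cannot close a cycle of at most $2k$ edges (so the output has girth $>2k$), and the Moore bound. This is exactly the argument the paper itself reuses in Section 3 (Claims \ref{clm:correct} and \ref{clm:size}); note that the paper does not prove Theorem \ref{thm:genspan} at all but imports it from \cite{ADDJS93, CW18, LS23, Bodwin25}, so for this half there is nothing to compare beyond saying you have reproduced the standard proof faithfully.

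The lightness half, however, does not establish the stated bound. Your scale-by-scale reduction (keep the MST, bucket non-tree edges geometrically, partition $T$ into clusters of diameter $O(\eps 2^j)$ at scale $j$, contract, and apply the sparse spanner to each cluster graph) is sound as far as it goes, and the stretch accounting works out because a $(2k-1)$-spanner path in $G_j$ for a scale-$j$ edge uses only $O(k)$ edges and hence only $O(k)$ intra-cluster detours of length $O(\eps 2^j)$ each. But it yields lightness $O_{\eps}(n^{1/k}\log n)$, and the theorem claims $O_{\eps}(n^{1/k})$. You correctly identify that removing the $\log n$ requires amortizing the per-scale costs across a laminar hierarchy of clusterings via a credit scheme, but you only gesture at this ("one arranges\dots and charges\dots in an amortized way"); that charging argument is precisely the technical contribution of \cite{CW18, LS23, Bodwin25} and is the same machinery the present paper later black-boxes as Lemma \ref{lem:blackbox}. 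It is not a routine step one can wave through: the credits carried by clusters must simultaneously pay for inter-cluster spanner edges at the current scale and be preserved (up to the $1/\eps$ overhead) for all higher scales, which forces the specific cluster-merging rules and the degree bound on the cluster graph to interact. So the proposal proves a weaker statement than claimed; to close the gap you would need to actually carry out one of the cited credit schemes rather than cite its existence.
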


This sparsity bound is known to be best possible, assuming the girth conjecture \cite{girth}.
For lightness, the $n^{1/k}$ dependence is tight assuming the girth conjecture as well.
The exact (hidden) dependence on $\eps$ remains open, but it was recently proved that the $\eps$-dependence is not entirely removable \cite{BF25}.

\subsection{Sparsity and Lightness in Minor-Free Graphs}

An important special case for spanners is that where the input is $K_h$-minor-free; this arises as a technical tool in approximation algorithms for the Traveling Salesman Problem \cite{RS98, AGKKW98, BDT14, GS02, DHK11}, and in some sense it generalizes the well-studied setting of Euclidean input graphs (see book \cite{NS07}).
A $K_h$-minor-free graph is already guaranteed to be quite sparse, and therefore comparably light:

\begin{theorem} [\cite{Kostochka82, AKS23}]
Every $n$-node, $K_h$-minor-free metric graph has sparsity $O(h \sqrt{\log h})$ and lightness $O(n h \sqrt{\log h})$.\footnote{For a gradual exposition of the proof and development of ideas in this theorem, we recommend the excellent recent blog post by Hung Le \cite{hungblog}.}
\end{theorem}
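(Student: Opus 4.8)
The plan is to take the trivial spanner $H=G$ itself, which has stretch $1$ and hence is a valid $k$-spanner for every $k\ge 1$; with this choice both claims reduce to statements about $G$ alone, namely a bound on $|E(G)|$ for sparsity and a bound on $w(G)/w(\mst(G))$ for lightness. The sparsity bound is precisely the classical extremal estimate for $K_h$-minor-free graphs, which I would use as a black box, and the lightness bound then follows from it by a one-line argument exploiting the metric assumption, exactly as in the Folklore theorem above.

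\emph{Sparsity.} I would invoke the theorem of Kostochka (later sharpened by Thomason; see the works cited in the statement and Le's expository blog post) that every $n$-node $K_h$-minor-free simple graph has at most $O(h\sqrt{\log h})\cdot n$ edges --- equivalently, that a graph of average degree $d$ necessarily contains a $K_{\Omega(d/\sqrt{\log d})}$-minor. Since $G$ is a metric graph it is simple (parallel edges or loops would violate the unique-shortest-path property), so the estimate applies verbatim and gives sparsity $|E(G)|/n = O(h\sqrt{\log h})$. For orientation only: the extremal estimate itself is proved by first passing to a subgraph of minimum degree $\Omega(d)$ and then building the clique minor branch by branch via a counting/probabilistic argument --- randomly partitioning a dense portion of the vertex set into $\Theta(d/\sqrt{\log d})$ parts and arguing that with positive probability the parts are connected and pairwise adjacent --- but I would not reproduce this.

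\emph{Lightness.} Let $T=\mst(G)$ and let $e=(u,v)$ be any edge of $G$. Because $G$ is metric, $w(e)=\dist_G(u,v)$; because $T$ is a spanning subgraph of $G$, $\dist_G(u,v)\le \dist_T(u,v)\le w(T)$, since the $u$--$v$ path in $T$ uses each of its edges once. Hence every edge of $G$ has weight at most $w(\mst(G))$, so
$$w(G)=\sum_{e\in E(G)} w(e)\;\le\; |E(G)|\cdot w(\mst(G)),$$
and therefore $\ell(G\mid G)=w(G)/w(\mst(G))\le |E(G)|=O\!\left(n\,h\sqrt{\log h}\right)$, using the edge bound from the previous step.

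The only nontrivial ingredient is the Kostochka--Thomason extremal bound, which is invoked entirely off the shelf, so there is no real obstacle inside this proof; the spanner-theoretic content of the theorem is negligible, and its role is to serve as the $k=1$ baseline that the new size/stretch tradeoffs of the paper improve upon for larger $k$.
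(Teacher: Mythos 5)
Your proposal is correct and matches the intended argument: the paper cites this result rather than proving it, with the sparsity bound being exactly the Kostochka--Thomason extremal estimate invoked as a black box, and the lightness bound following by the same one-line ``each edge weighs at most $w(\mst(G))$ in a metric graph'' observation the paper uses for its Folklore theorem. Nothing is missing.
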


Is is natural to ask, like for general graphs, whether one can remove the $n$-dependence from the lightness bound by paying an extra $(1+\eps)$ factor in the stretch.
This was a long-standing open question: it was first asked in 2002 by Grigni and Sissokho \cite{GS02}, and then conjectured to be possible in 2012 by Grigni and Hung \cite{GH12} (who showed that $(1+\eps)$ stretch can be used to improve the dependence to $\log n$).
The result was finally achieved in a remarkable FOCS 2017 paper by Borradaille, Le, and Wulff-Nilsen:

\begin{theorem} [\cite{BLW17}] \label{thm:lightmfspan}
For any $\eps > 0$, every $n$-node, $K_h$-minor-free graph has a $(1+\eps)$-spanner with lightness $O_{\eps}(h \sqrt{\log h})$.
\end{theorem}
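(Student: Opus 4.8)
The plan is to follow the standard weight-bucketing framework for light spanners, invoking $K_h$-minor-freeness in two places: to merge \mst\ edges into clusters of small diameter, and to sparsify the resulting cluster graphs via the $O(h\sqrt{\log h})$ sparsity bound for $K_h$-minor-free graphs \cite{Kostochka82, AKS23}. First I would rescale so that the minimum edge weight is $1$ and partition $E(G)$ into weight classes $E_i = \{e : w(e) \in [2^i, 2^{i+1})\}$, and process the classes in increasing order of $i$. Throughout, I maintain a partition of $V(G)$ into clusters, each of which is connected by \mst\ edges and has \mst-diameter $O(\eps\, 2^i)$ during the handling of class $i$. To handle class $i$: contract each current cluster to a single node; since contracting connected subgraphs preserves $K_h$-minor-freeness, the resulting multigraph $G_i$ is $K_h$-minor-free, and after keeping only the lightest edge between each pair of clusters it admits (by the sparsity bound) a connected spanning subgraph $H_i$ with $O(h\sqrt{\log h})$ edges per node. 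Lift the edges of $H_i$ to $G$ and add them to the spanner; then \emph{merge} clusters that have become sufficiently interconnected, so that the \mst-diameter invariant is restored at scale $i+1$. Finally add one copy of $\mst(G)$. A routing argument gives stretch $1 + O(\eps)$: an edge $e \in E_i$ cannot join two nodes of a common level-$i$ cluster (that would make $e$ non-metric, since the intra-cluster \mst\ path has length $O(\eps\,2^i) < w(e)$), so $e$ crosses between clusters; routing $e$ along the corresponding low-weight path in $H_i$, any such path has weight $< 2^{i+2}$ and edges of weight $\ge 2^i$, hence uses $O(1)$ cluster-crossing edges, and the $O(1)$ intra-cluster \mst\ segments on the detour add only $O(\eps\, 2^i) = O(\eps)\,w(e)$.

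The heart of the argument is the lightness bound $w(H) = O_\eps(h\sqrt{\log h})\cdot w(\mst(G))$. The $\mst(G)$ copy contributes $w(\mst(G))$. Each lifted edge of $H_i$ has weight $\Theta(2^i)$, and there are $O(h\sqrt{\log h})$ of them per level-$i$ cluster, so class $i$ contributes $O(h\sqrt{\log h})\cdot 2^i\cdot(\text{number of level-}i\text{ clusters})$ to $w(H)$. The naive accounting pays for a level-$i$ cluster's new spanner edges using the \mst\ weight it contains; but that weight may be as small as $\Theta(\eps\,2^i)$, giving a charge of $O(h\sqrt{\log h}/\eps)$ per cluster per level, and summing over the $\Theta(\log(\text{aspect ratio}))$ scales resurrects exactly the $n$-dependence — the $O(\log n)$ lightness of Grigni--Hung \cite{GH12} — that \Cref{thm:lightmfspan} is meant to remove. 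Overcoming this is the crux, and the mechanism of \cite{BLW17} is an amortized charging across the cluster hierarchy produced by the merging step: one argues that a level-$i$ cluster gets absorbed into a larger cluster at a later scale \emph{unless} it already contains a constant fraction of $2^i$ worth of \mst, so that (i) the spanner edges charged to any cluster are paid for by that cluster's own internal \mst\ weight at rate $O_\eps(h\sqrt{\log h})$, and (ii) once a cluster has accumulated $\Omega(2^i)$ worth of \mst\ it drops out of the process within $O(\log(1/\eps))$ further scales, so each \mst\ edge lies in the ``internal \mst\ weight'' of clusters on only $O(\log(1/\eps))$ levels. A potential-function bookkeeping reconciles (i), (ii), and the diameter invariant needed for the stretch bound.

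I expect this amortized merging-and-charging analysis to be the main obstacle; the bucketing, the minor-free sparsification, and the routing are routine once the cluster hierarchy is correctly organized. The remaining care is bookkeeping: the sparsity theorem is stated for simple graphs, so after contraction one must handle parallel edges of $G_i$ separately (keeping the lightest edge between each pair of clusters is free, since a heavier parallel edge routes along the lighter one plus two short intra-cluster \mst\ segments); and the diameter threshold must be taken as $\Theta(\eps\,2^i)$ rather than $\Theta(2^i)$ precisely to push the stretch from $O(1)$ down to $1+O(\eps)$, with a final rescaling of $\eps$.
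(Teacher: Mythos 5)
Your architecture is the right one: it is precisely the Borradaile--Le--Wulff-Nilsen framework that this paper itself leans on in Section~4 (a weight-scale hierarchy, clusters of bounded MST-diameter at each scale, cluster graphs that are minors of $G$ and hence $K_h$-minor-free with average degree $O(h\sqrt{\log h})$, and an amortized credit scheme to kill the $\log(\text{aspect ratio})$ factor). You also correctly identify that the naive per-scale accounting only recovers the $O(\log n)$ lightness of \cite{GH12}, and that the amortization across the cluster hierarchy is the crux. The bucketing, the contraction/minor argument, the parallel-edge cleanup, and the $1+O(\eps)$ routing are all fine.

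The gap is that the crux is not actually carried out, and the mechanism you sketch for it does not close as stated. First, your invariant (ii) is not coherent: a cluster that has accumulated $\Omega(2^i)$ of MST weight does not ``drop out of the process'' --- it persists as (part of) a cluster at every scale $i' > i$ and is charged $\Theta(h\sqrt{\log h}\cdot 2^{i'})$ for its incident cluster-graph edges at each such scale, while the internal MST weight you propose to pay with is still only $\Theta(2^i)$. So claim (i) fails at higher scales unless every cluster reliably merges into something with proportionally more MST weight, and guaranteeing this (what to do with low-degree clusters in $\mathcal{K}^i$ that have too few incident edges to force a merge, clusters incident to a single heavy edge, etc.) is exactly the delicate case analysis that constitutes Lemma~2 of \cite{BLW17}. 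Second, your scales are dyadic with cluster diameter $\Theta(\eps\,2^i)$, so a level-$(i+1)$ cluster contains only $O(1)$ level-$i$ clusters and credit cannot accumulate geometrically relative to the new edge weights; \cite{BLW17} instead separate consecutive hierarchy levels by a factor of $1/\eps$ (edge classes $[2^j\eps^{-i}, 2^{j+1}\eps^{-i})$, with the union over $j\in[0,\lceil\log(1/\eps)\rceil]$ producing the $\log(1/\eps)$ in the final bound), and that gap is what lets a parent cluster absorb $\Omega(1/\eps)$ children and inherit enough credit to pay for its incident edges. Saying ``a potential-function bookkeeping reconciles (i), (ii), and the diameter invariant'' defers the entire content of the theorem --- this is the step that stood open from \cite{GS02} until 2017, and the present paper deliberately imports it as a black box (Lemma~\ref{lem:blackbox}) rather than reprove it.
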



Analogous to the work on general graphs, a natural next step after this theorem is to investigate the \emph{full} tradeoff between stretch and sparsity/lightness for minor-free graphs.
That is the subject of the current paper.
We prove the following new bounds:

\begin{theorem} [Main Result, Upper Bounds] \label{thm:mainupper}
For any $\eps > 0$ and positive integer $k$, every $n$-node, $K_h$-minor-free graph has a $(2k-1)$-spanner with sparsity
$$O\left(h^{\frac{2}{k+1}} \cdot \polylog h\right),$$
and a $(1+\eps)(2k-1)$-spanner with lightness
$$O_{\eps}\left(h^{\frac{2}{k+1}} \cdot \polylog h \right).$$
\end{theorem}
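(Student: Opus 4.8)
The plan is to derive both bounds from a single combinatorial input: the improved density increment theorem, specialized to high-girth minor-free graphs. Concretely, I would first isolate the following lemma as the engine of the whole argument: \emph{every $n$-node $K_h$-minor-free graph of combinatorial girth at least $2k+1$ has at most $O\!\left(h^{2/(k+1)}\cdot \polylog h\right)$ edges.} This is precisely the regime in which minor-freeness becomes far more restrictive than the generic $O(h\sqrt{\log h})$ edge bound: in a graph of girth $\ge 2k+1$ and minimum degree $d$, balls of radius $\approx k/2$ are large (of size $d^{\Omega(k)}$) and serve as branch sets of a clique minor, so the graph contains a $K_\ell$-minor with $\ell = \Omega(d^{(k+1)/2}/\polylog d)$; this is the Kühn--Osthus estimate, with the polylog dependence sharpened through Postle's recent density increment and Thomason's bounds. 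Being $K_h$-minor-free then forces $d = O(h^{2/(k+1)}\polylog h)$, and since this applies to every subgraph it bounds the average degree as well, giving the lemma. The only delicate point here is bookkeeping of the polylogarithmic factors; the exponent $2/(k+1)$ falls out of the arithmetic $d^{(k+1)/2} = O(h\,\polylog h)$.

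For the sparsity bound I would run the greedy $(2k-1)$-spanner algorithm on $G$: process edges in non-decreasing weight order, and add $(u,v)$ to $H$ iff the current spanner distance $d_H(u,v)$ exceeds $(2k-1)\,w(u,v)$. By construction $H$ is a $(2k-1)$-spanner, and it is $K_h$-minor-free as a subgraph of $G$. The point is that $H$ has combinatorial girth at least $2k+1$: if some cycle $C \subseteq H$ had at most $2k$ edges, let $e=(u,v)$ be the edge of $C$ processed last by the algorithm; then every other edge of $C$ already lies in $H$ when $e$ is considered, so $C - e$ is a path from $u$ to $v$ of weight $w(C)-w(e) \le (2k-1)\,w(e)$ in $H$ at that moment, and greedy would not have added $e$ --- a contradiction. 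Applying the lemma to $H$ yields $|E(H)| = O(h^{2/(k+1)}\polylog h)\cdot n$, which is exactly the claimed sparsity.

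For the lightness bound I would invoke the light-spanner framework of Borradaille--Le--Wulff-Nilsen and its subsequent refinements, instantiated for the minor-closed family of $K_h$-minor-free graphs. Placing the MST $T$ in the spanner costs lightness $1$, and one processes the weight scales $L_i = 2^i$ in turn (using $T$ for inter-scale connectivity, a standard reduction). At scale $i$ one forms the cluster graph $G_i$ by contracting a partition of $V(G)$ into pieces that are connected in $T$ and have $T$-diameter $O(\eps L_i/k)$; contraction preserves $K_h$-minor-freeness, so a greedy $(2k-1)$-spanner of the scale-$i$ edges of $G_i$ (loops and parallel edges removed) has at most $O(h^{2/(k+1)}\polylog h)\cdot |V(G_i)|$ edges by the sparsity bound. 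Lifting these edges back to $G$ and adding them to the spanner, a detour through a cluster costs only an $O(\eps/k)$-fraction of a scale-$i$ edge's weight, so a scale-$i$ edge receives stretch $(1+O(\eps))(2k-1)$; after rescaling $\eps$, $H$ is a $(1+\eps)(2k-1)$-spanner. For the weight, each added scale-$i$ edge has weight $O(L_i)$, and a charging/potential argument (as in BLW) gives $\sum_i |V(G_i)|\cdot L_i = O_{\eps}(w(\mst))$ --- each non-singleton scale-$i$ cluster consumes $\Omega(\eps L_i/k)$ of MST weight, and singleton clusters incident to scale-$i$ edges are charged against the spanner sparsity at a coarser scale --- so the total added weight is $O_{\eps}(h^{2/(k+1)}\polylog h)\cdot w(\mst)$.

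I expect the lightness analysis to be the main obstacle: one must verify that the BLW charging/potential argument survives with base stretch $2k-1$ in place of $1+\eps$, in particular that the cluster-graph construction, the interaction between the $(2k-1)$ stretch used inside the cluster graph and the $(1+\eps)$ stretch bought by cluster detours, and the accounting of singleton clusters all go through without losing a logarithmic factor in the number of scales. The sparsity half, by contrast, reduces cleanly to the high-girth density lemma, whose proof is essentially a citation to a girth-sensitive form of the density increment theorem once the polylog factors are tracked.
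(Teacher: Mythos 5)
Your overall architecture is the same as the paper's: both halves reduce to a single lemma stating that every $n$-vertex $K_h$-minor-free graph of girth $>2k$ has $O(n\cdot h^{2/(k+1)}\polylog h)$ edges, your sparsity argument (greedy spanner, girth $>2k$ via the last-processed edge of a short cycle, then the lemma) is exactly the paper's, and your lightness argument is the same BLW clustering scheme with the new edge bound applied to the cluster graphs. The one genuinely different ingredient is your proof of the key lemma: you propose the K\"uhn--Osthus route, growing disjoint balls of radius $\approx k/2$ in a high-girth, minimum-degree-$d$ graph to obtain a $K_{\Omega(d^{(k+1)/2}/\polylog d)}$-minor and concluding $d=O(h^{2/(k+1)}\polylog h)$ from minor-freeness, whereas the paper (\Cref{lem:size}) never builds a minor: it applies Postle's density increment theorem to pass to a subgraph on at most $(h^2/d)\polylog h$ vertices with average degree $\ge d/\polylog h$, and then hits that small high-girth subgraph with the Moore bound, so the exponent falls out of $d\le (h^2/d)^{1/k}\polylog h$. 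Your route is legitimate and, if the K\"uhn--Osthus polylog factors cooperate, would even give the sharper form $\left(h\,\polylog h\right)^{2/(k+1)}$ of \Cref{cjt:betterbound}; the cost is that you must check the girth-parity/threshold hypotheses of that theorem for every $k$, which the density-increment route sidesteps. On lightness there is a second, more cosmetic divergence: you build the spanner scale-by-scale by running greedy on each cluster graph and lifting, so the cluster-level girth bound is automatic but spanner correctness requires an induction over scales, whereas the paper runs one global greedy $(1+s\eps)(2k-1)$-spanner (correctness is immediate) and must instead prove, via its weighted girth and the low-diameter invariant, that the induced cluster graphs have girth $>2k$ (\Cref{claim: graph props}). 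In both versions the summation over scales without a $\log(w_{\max}/w_{\min})$ loss is delegated to the BLW credit scheme, which the paper black-boxes as \Cref{lem:blackbox} and which you correctly flag as the step requiring verification with base stretch $2k-1$.
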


The exponent $\frac{2}{k+1}$ in Theorem \ref{thm:mainupper} might look surprising; at first, one might guess that the exponent would be $\frac{1}{k}$ to match Theorem \ref{thm:genspan}.
However, we further prove that this is indeed the correct exponent for minor-free graphs:

\begin{theorem} [Main Result, Lower Bounds]
Assuming the girth conjecture \cite{girth}, there are $n$-node $K_h$-minor-free graphs for which any $(2k-1)$-spanner has sparsity and lightness
$$\Omega\left( h^{\frac{2}{k+1}} \right).$$
\end{theorem}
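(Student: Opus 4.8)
The plan is to build the lower-bound instance out of many disjoint copies of a high-girth graph whose size is calibrated so that it is just barely $K_h$-minor-free. Phrase the girth conjecture \cite{girth} as: for every positive integer $k$ there are arbitrarily large $t$ admitting a $t$-vertex graph of girth at least $2k+1$ with $m = \Omega(t^{1+1/k})$ edges. Fix a sufficiently small absolute constant $c>0$, set $t := \lceil c\, h^{2k/(k+1)} \rceil$, and let $H_0$ be such a graph on $t$ vertices. The key point is the calibration: for this choice $t^{1+1/k} = \Theta(h^2)$, so choosing $c$ small enough forces $m < \binom{h}{2}$ once $h$ is large enough. Since any graph containing a $K_h$-minor must have at least $\binom{h}{2}$ edges --- one distinct edge for each pair of branch sets --- the graph $H_0$ is $K_h$-minor-free.

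Now, for any $n \ge 2t$, assemble an $n$-vertex graph $G$ by taking $\lfloor n/t\rfloor$ vertex-disjoint copies of $H_0$, padding with isolated vertices to reach exactly $n$ vertices, and joining all components by $O(n/t)$ bridge edges so that $G$ is connected. Adding an edge across two distinct components never closes a cycle, so $G$ still has girth at least $2k+1$. Moreover, $K_h$ is $2$-connected for $h \ge 3$ (the cases $h \le 2$ being trivial), and it is a standard fact that any $K_h$-minor of $G$ then lies within a single block of $G$; since every block of $G$ is a subgraph of a copy of $H_0$ or is a single bridge edge, $G$ inherits $K_h$-minor-freeness from $H_0$. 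Thus $G$ is an $n$-vertex $K_h$-minor-free graph with $|E(G)| = \Theta\!\left(n \cdot t^{1/k}\right) = \Theta\!\left(n \cdot h^{2/(k+1)}\right)$ edges.

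Finally, I would argue that $G$ forces a large spanner. Since $G$ has girth at least $2k+1$, no edge of $G$ lies on a cycle of length at most $2k$, so deleting any edge $(u,v)$ leaves $d_{G-(u,v)}(u,v) \ge 2k > 2k-1$ --- or disconnects $u$ from $v$, in the case of a bridge. Hence every $(2k-1)$-spanner of $G$ must contain all of $E(G)$, so its sparsity is $|E(G)|/n = \Theta(h^{2/(k+1)})$. Putting unit weight on every edge, $w(\mst(G)) = n-1$ while every $(2k-1)$-spanner has weight $|E(G)| = \Theta(n \cdot h^{2/(k+1)})$, so its lightness is $\Theta(h^{2/(k+1)})$ as well. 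The only genuinely delicate step is the calibration in the first paragraph, and it is precisely where the exponent $\tfrac{2}{k+1}$ originates: the unconditional Moore bound caps a girth-$(2k+1)$ graph at $O(t^{1+1/k})$ edges, whereas $K_h$-minor-freeness caps it at $O(h^2)$ edges regardless of $t$; equating $t^{1+1/k} \asymp h^2$ forces $t \asymp h^{2k/(k+1)}$ and hence sparsity $t^{1/k} \asymp h^{2/(k+1)}$, rather than the $h^{1/k}$ one might have guessed from the general-graph bound.
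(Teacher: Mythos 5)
Your proposal is correct and follows essentially the same route as the paper: take vertex-disjoint copies of a girth-conjecture extremal graph on $\Theta(h^{2k/(k+1)})$ vertices, calibrated so each copy has fewer than $\binom{h}{2}$ edges and is therefore $K_h$-minor-free, and observe that girth $>2k$ forces any $(2k-1)$-spanner to keep every edge. If anything, you are slightly more careful than the paper's written proof, which only spells out the sparsity bound and trims the edge count by deletion rather than via the Moore bound; your bridge/block argument cleanly handles the connectivity needed for the lightness claim.
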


Our upper and lower bounds differ by $\polylog h$ factors.
It seems unlikely that these factors can be removed entirely from the upper bound, since we know that a $\sqrt{\log h}$ factor is truly necessary in the sparsity and lightness bounds for minor-free graphs.
However, it would be interesting to have this $\polylog h$ dependence degrade with the stretch.
We conjecture that the following improvement to our bounds will be possible:

\begin{conjecture} \label{cjt:betterbound}
For any $\eps > 0$ and positive integer $k$, every $n$-node, $K_h$-minor-free graph has a $(2k-1)$-spanner with sparsity
$$O\left(\left(h \cdot \polylog h\right)^{\frac{2}{k+1}}\right),$$
and a $(1+\eps)(2k-1)$-spanner with lightness
$$O_{\eps}\left(\left(h \cdot \polylog h\right)^{\frac{2}{k+1}} \right).$$
\end{conjecture}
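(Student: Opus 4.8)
The plan is to first establish the upper bounds of Theorem~\ref{thm:mainupper} and then indicate what additional input would sharpen them to the form conjectured above. For the sparsity bound, the starting point is the observation that the \emph{greedy} $(2k-1)$-spanner $H$ of $G$ has girth at least $2k+1$: if a cycle of length at most $2k$ survived, then when its heaviest edge was considered its endpoints were already at distance at most $2k-1$, so that edge would have been skipped. Since $H$ is also $K_h$-minor-free (being a subgraph of $G$), it suffices to prove the following purely extremal statement, which I will call the \emph{key lemma}: every $n$-node $K_h$-minor-free graph of girth at least $2k+1$ has at most $O\!\left(h^{2/(k+1)}\cdot\polylog h\right)\cdot n$ edges. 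Given the key lemma, the sparsity bound is immediate, and the lightness bound follows by feeding it into a recursive clustering construction, as discussed below.

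To prove the key lemma, suppose toward a contradiction that $F$ is $K_h$-minor-free, has girth at least $2k+1$, and has average degree $d \gg h^{2/(k+1)}\polylog h$; by repeatedly deleting vertices of below-average degree we may assume $F$ has minimum degree $\Omega(d)$ while retaining the girth and minor-freeness. I aim to build a $K_t$-minor of $F$ with $t \gg h$, a contradiction. The girth hypothesis is exactly what makes this possible more efficiently than in a general graph: for every $r\le k-1$, each radius-$r$ ball of $F$ induces a \emph{tree} on $\Omega(d^{r})$ vertices, so by contracting a maximal family of pairwise-disjoint such balls --- extended to a partition of $V(F)$ into connected branch sets via a nearest-center assignment --- one obtains a minor on only $O(n/d^{r})$ vertices, which is again $K_h$-minor-free. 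A priori this contracted minor need not be dense, and this is precisely where Postle's recent density increment theorem is used: it guarantees that such a contraction, or a short iteration of such contractions with the radius tuned (to roughly $k/2$) at each step, either already exhibits a $K_h$-minor or produces a minor whose average degree has been boosted to essentially $d^{(k+1)/2}$ up to $\polylog$ factors --- intuitively, the girth budget $2k+1$ is being spent to buy roughly $(k+1)/2$ levels of multiplicative density compounding. One final application of the Kostochka--Thomason bound to this dense minor then yields a $K_t$-minor with $t=\tilde\Omega\!\left(d^{(k+1)/2}\right)\gg h$, the desired contradiction; read the other way, this is the bound $d=O(h^{2/(k+1)}\polylog h)$. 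I expect this last calibration to be the main obstacle: one must choose the ball radii and the increment schedule so that the exponent comes out to exactly $(k+1)/2$ --- not $1/(k+1)$, which is all a single naive contraction gives, nor some intermediate value --- and this is the one place where Postle's sharpened density increment, rather than Kostochka--Thomason alone, is genuinely needed.

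The lightness bound cannot use the greedy spanner directly, since greedy controls sparsity but not weight; instead I would adapt the recursive clustering framework of Borradaile--Le--Wulff-Nilsen behind Theorem~\ref{thm:lightmfspan}, which groups vertices into clusters at geometrically increasing distance scales, contracts each cluster, and recurses on the resulting cluster graph, adding only edges that are ``long'' relative to the current scale. The point is that if the per-scale edge-addition is run with stretch parameter $2k-1$, then at every scale the cluster graph is both $K_h$-minor-free (a minor of $G$) and of girth at least $2k+1$, so the key lemma bounds its sparsity by $O\!\left(h^{2/(k+1)}\polylog h\right)$; a Borradaile--Le--Wulff-Nilsen-style potential/charging argument then telescopes the per-scale weights into a total of $O_\eps\!\left(h^{2/(k+1)}\polylog h\right)\cdot w(\mst(G))$, with the extra $(1+\eps)$ stretch factor being the familiar price of charging across scales (it arises because traversing a cluster of diameter comparable to the current scale inflates the effective stretch by $1+\eps$). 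The secondary obstacle here is one of adaptation: that machinery was written for stretch $1+\eps$, and one must verify both that the cluster graphs really inherit girth at least $2k+1$ at every level and that the charging survives the larger stretch parameter.

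Finally, to reach the \emph{conjectured} bound $O\!\left((h\cdot\polylog h)^{2/(k+1)}\right)$ in place of $O\!\left(h^{2/(k+1)}\cdot\polylog h\right)$, one would want a version of the key lemma in which the density boost from $d$ up to $\sim d^{(k+1)/2}$ is carried out while losing only a \emph{single} $\sqrt{\log(\cdot)}$-type factor overall --- rather than one such factor per level of the contraction iteration --- and then invokes Kostochka--Thomason exactly once at the very end, so that its $\sqrt{\log h}$ loss is discounted by the exponent $2/(k+1)$ along with everything else. Whether such a ``one-shot'' density increment for high-girth graphs is available appears to be the crux of the conjecture.
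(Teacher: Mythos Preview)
First, a framing point: the statement is a \emph{conjecture}, and the paper offers no proof of it. Your last paragraph correctly acknowledges this and identifies the obstruction (the $\polylog h$ loss in the density increment), matching the paper's own remark that the gap between Theorem~\ref{thm:mainupper} and Conjecture~\ref{cjt:betterbound} comes entirely from that loss. The substantive content of your proposal is therefore a sketch of Theorem~\ref{thm:mainupper}, so I compare that to what the paper actually does.

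Your key lemma is exactly Lemma~\ref{lem:size}, and your reduction to it via the greedy spanner matches the paper. But your \emph{proof} of the key lemma is not the paper's, and as written it has a gap. The paper's argument is a clean two-step: apply the density increment theorem (Theorem~\ref{thm: density increment}) as a black box to the $K_h$-minor-free, girth-$>2k$ graph of average degree $d$, obtaining a \emph{subgraph} $H'$ on at most $(h^2/d)\cdot\polylog h$ vertices with average degree at least $d/\polylog h$; then apply the Moore bound (Theorem~\ref{thm:moore_bounds}) to $H'$, which still has girth $>2k$, to get $d/\polylog h \le O(|V(H')|^{1/k}) \le O\big(((h^2/d)\polylog h)^{1/k}\big)$, and solve for $d$. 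There is no ball contraction and no minor-building. Your proposal instead contracts radius-$r$ balls and asserts that Postle's theorem ``guarantees that such a contraction \ldots\ produces a minor whose average degree has been boosted to essentially $d^{(k+1)/2}$.'' That is not what the theorem says: it locates a small subgraph of \emph{comparable} (not amplified) density inside a $K_h$-minor-free graph, and it says nothing about your contracted minors. Without a separate argument that the ball-contracted minor has average degree of order $d^{(k+1)/2}$ --- which is nontrivial, since after one round of contraction the girth hypothesis is destroyed and you cannot simply iterate --- the final appeal to Kostochka--Thomason is unsupported.

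On lightness, your outline is broadly right, with one correction: the paper \emph{does} use the greedy spanner directly (the greedy $(1+s\eps)(2k-1)$-spanner), and the BLW clustering appears only in the \emph{analysis} of its weight, not as an alternative construction. The new ingredient you correctly anticipated is Claim~\ref{claim: graph props}: each cluster graph $\mathcal{K}_j^i$ is both $K_h$-minor-free and of girth $>2k$, so Lemma~\ref{lem:size} bounds its average degree by $O(h^{2/(k+1)}\polylog h)$, and the black-boxed BLW charging (Lemma~\ref{lem:blackbox}) converts this into the stated lightness.
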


The advantage of this conjecture over our main result would be that it would allow spanners of truly constant sparsity/lightness (even as a function of $h$), in exchange for stretch $O(\log h)$.
This would be a satisfying analog to the important special case of spanners with constant sparsity/lightness and stretch $O(\log n)$ for general graphs.

\subsection{Techniques: A Density Increment Theorem for Minor-Free Graphs}
A \emph{density increment theorem} for minor-free graphs is a result that guarantees that every minor-free graph has a relatively dense subgraph.
Density increment theorems for minor-free graphs have emerged as a powerful tool in recent progress on Hadwiger's Conjecture \cite{postle2020even, DP25, NPS23}.  
This paper makes the point that they may be far more broadly applicable to algorithmic and combinatorial analysis of minor-free graphs as well.

For intuition, let us revisit the bound of $O(h \sqrt{\log h})$ on the sparsity of minor-free graphs.
One can get a nearly-matching lower bound by considering the disjoint union of $\frac{n}{h-1}$ cliques of size $h-1$ each: this has sparsity $\Omega(h)$, and is clearly $K_h$-minor-free.
The actual tight lower bound is only slightly more involved than that; it is a disjoint union of many carefully-designed small graphs that are each nearly a clique.

These constructions based on disjoint unions will contain ``clustered'' parts of the graph, i.e., very small subgraphs whose average degree is roughly the same as the original graph.
The Density Increment Theorem says, in some sense, that this is not a coincidence: \emph{all} nearly-tight lower bound graphs must have this structure.

\begin{theorem}[Density Increment Theorem  \cite{postle2020even}, reformulated]\footnote{The cited work has been withdrawn from arXiv, since the goal of this theorem in the paper was to prove an $O(t\log\log^6 t)$ bound on Hadwiger's conjecture and an improved bound of $O(t\log\log t)$ was proved by the author and Delcourt in a subsequent work \cite{delcourt2025reducing} via a similar approach. However, the stated Density Increment Theorem still holds. See also the remark in \cite{delcourt2025reducing} (the paragraph after Theorem 2.2).} \label{thm: density increment} 
Let $G$ be a $K_h$-minor-free graph with average degree $d$. Then $G$ contains a subgraph on at most $(h^2/d) \cdot \text{\normalfont polylog } h$ vertices, with average degree at least $\frac{d}{\text{\normalfont polylog } h}$.
\end{theorem}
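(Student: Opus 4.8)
This is the density increment theorem of Postle~\cite{postle2020even} (in the line of work culminating in \cite{delcourt2025reducing}), so the plan is to sketch the mechanism by which such a statement is established rather than to reprove it from scratch. The overall strategy is a dichotomy: either we directly exhibit the desired small dense subgraph, or the \emph{absence} of such a subgraph is leveraged to construct a $K_h$-minor in $G$, contradicting the hypothesis. Before starting I would make the routine reductions. By the folklore fact that a graph of average degree $d$ contains a subgraph of minimum degree at least $d/2$, and since subgraphs of $K_h$-minor-free graphs are $K_h$-minor-free, we may assume $G$ has minimum degree $\delta \geq d/2$. If $G$ already has at most $m := (h^2/d)\cdot \polylog h$ vertices, take $H := G$ and we are done; likewise if every connected component of $G$ has at most $m$ vertices, any single component works (its average degree is $\geq \delta \geq d/\polylog h$). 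So assume $G$ is connected with more than $m$ vertices, and suppose for contradiction that every subgraph of $G$ on at most $m$ vertices has average degree less than $\delta' := d/\polylog h$.

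The engine is a ball-expansion estimate. Fix a vertex $v$ and a radius $r$ with $|B_r(v)| \leq m$: every vertex of $B_{r-1}(v)$ has all of its $\geq \delta$ neighbours inside $B_r(v)$, so $\delta\,|B_{r-1}(v)| \leq 2\,|E(G[B_r(v)])| < \delta'\,|B_r(v)|$, hence $|B_r(v)| > (\delta/\delta')\,|B_{r-1}(v)| = \polylog h \cdot |B_{r-1}(v)|$. Thus balls grow by a polylogarithmic factor at every step until they exceed $m$ vertices, so (modulo a routine truncation at the step where a ball overshoots) around \emph{every} vertex one can carve out a connected vertex set of any prescribed size up to $\Theta(m)$ within $O(\log h/\log\log h)$ expansion steps. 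Choosing this size to be $s := (h/d)\cdot\polylog h = \Theta(m/h)$, one obtains a huge supply of connected ``clusters'' of size $\Theta(s)$, and a maximal packing of pairwise-disjoint ones leaves every vertex at bounded distance from a cluster center.

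The hard part --- and where essentially all of the real work in Postle's argument lies --- is the final step: contracting each cluster to a single vertex yields a minor $\hat G$ of $G$ on $\Theta(|V(G)|/s)$ vertices which inherits high minimum degree and good expansion from the smallness of the clusters, and one must now locate an honest $K_h$-minor inside $\hat G$, i.e.\ $h$ pairwise-adjacent branch sets. The natural route is a well-linkedness/tangle argument combined with a probabilistic construction --- pick $\Theta(h)$ cluster centers at random and route connecting paths between all $\binom{h}{2}$ pairs, using expansion to make the routings essentially vertex-disjoint --- in the spirit of the classical clique-minor lower bounds of Kostochka, Thomason, and Fountoulakis--Kühn--Osthus and their refinements by Norin--Song and Delcourt--Postle. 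The $\polylog h$ losses in both the vertex count and the density come from the polylogarithmic per-step growth rate (controlling cluster size) and from the routing overhead in this step.

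Finally, the exponent of the bound $h^2/d$ is forced by a simple budget count that I would use as a sanity check throughout: a $K_h$-minor needs $h$ branch sets, each sending an edge to $\Theta(h)$ others, i.e.\ $\Theta(h^2)$ ``crossing'' edges, while a subgraph of average degree $d$ offers only $\Theta(d)$ edge-endpoints per vertex, so $\Theta(h^2/d)$ vertices are unavoidable. Conversely, by the Kostochka--Thomason bound $d = O(h\sqrt{\log h})$ the theorem carries content only when $d$ is within a $\polylog h$ factor of extremal, in which case it says that a near-extremal $K_h$-minor-free graph must contain a clique-like subgraph on $\tilde\Theta(h)$ vertices --- exactly the ``clustered'' structure of the extremal examples.
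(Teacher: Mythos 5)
The paper does not prove this statement at all: it is imported verbatim (reformulated) from Postle's work, with a footnote explaining the provenance, and is used purely as a black box in Lemma~\ref{lem:size}. So there is no in-paper proof to compare yours against, and the fair question is whether your sketch is a correct account of the external argument.

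The elementary portions of your sketch are sound. The reduction to minimum degree $\delta \ge d/2$, the observation that a small component already suffices, and the ball-growth estimate $\delta\,|B_{r-1}(v)| \le 2\,|E(G[B_r(v)])| < \delta'\,|B_r(v)|$ under the contradiction hypothesis are all correct, and this ``no small dense subgraph implies small-set expansion'' dichotomy is indeed the engine of the density-increment arguments of Norin--Postle--Song and Delcourt--Postle. Your closing budget count also correctly explains why $h^2/d$ is the right vertex bound. However, the step you label as ``the hard part'' --- converting the expansion/clustering structure into an actual $K_h$-minor --- is where the entire content of the theorem lives, and your proposal handles it only by gesturing at well-linkedness, random branch-set selection, and path routing ``in the spirit of'' Kostochka--Thomason and its refinements. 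As written this is not a proof of that step, and it could not be checked or repaired from what you have given; in particular, making the $\binom{h}{2}$ routings ``essentially vertex-disjoint'' from polylogarithmic per-step expansion is exactly the delicate point that forces the $\polylog h$ losses and occupies most of the cited papers. Since the statement in the paper is itself a citation, deferring this step to the literature is defensible as an expository account, but you should present it as such rather than as a proof: the correct disposition is ``we use this as a black box from \cite{postle2020even, delcourt2025reducing},'' which is precisely what the paper does.
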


For analyzing spanner sparsity, the density increment theorem yields a remarkably simple proof.
Our strategy is to use the standard greedy spanner algorithm \cite{ADDJS93} to construct a $K_h$-minor-free spanner $H$ with high girth.
We can then use the density increment theorem to pass to a very small subgraph $H' \subseteq H$ of comparable sparsity, which still has high girth.
The standard Moore bounds, which limit the maximum possible density of a high-girth graph, then yield our desired theorem when applied to $H'$.
The gap of $\polylog h$ between Theorem \ref{thm:mainupper} and Conjecture \ref{cjt:betterbound} comes entirely from the $\polylog h$ loss in the density increment theorem itself.

For analyzing spanner lightness, the density increment theorem is still useful, but the proof is necessarily more involved.
We again start by using the greedy algorithm to construct a $K_h$-minor-free spanner $H$ with high \emph{weighted} girth (see Definition \ref{def:wtdgirth}).
However, we cannot simply use the density increment theorem to pass to a small subgraph $H'$: although we could bound the lightness of $H'$, we do not have a good way to compare $\mst(H')$ and $\mst(H)$, so this would not reveal much about the lightness of $H$.
Instead, we need to dip into the techniques of Borradaille, Le, and Wulff-Nilsen \cite{BLW17} for proving the initial bound on minor-free spanner lightness in Theorem \ref{thm:lightmfspan}.

\subsection{Other Related Work}

Spanners have been extensively studied in Euclidean graphs, where the nodes are embedded in Euclidean space, there are all possible edges among nodes, and the weight of each edge $(u, v)$ set to the Euclidean distance $w(u, v) := \|u - v\|_2$.
We refer to the book by Narasimhan and Smid for a thorough account \cite{NS07}.
The following near-optimal tradeoffs were recently determined by Le and Solomon \cite{LS22}:

\begin{theorem} [Constant Sparsity/Lightness Spanners for Euclidean Graphs~\cite{LS22}] \label{thm:euclidean}
For any constant $\eps > 0$ and constant integer $d \ge 2$, every $d$-dimensional Euclidean graph has a spanner with $\cdot (1+\eps)$ stretch, sparsity $O\left( \eps^{-d+1} \right)$, and lightness $O\left( \eps^{-d} \log \eps^{-1}\right)$.
These tradeoffs are best possible, up to the factor of $\log \eps^{-1}$ in lightness.
\end{theorem}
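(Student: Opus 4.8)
The plan is to establish the upper bounds with two classical spanner constructions and the lower bounds with explicit point configurations. For sparsity I would use a cone (``$\Theta$-graph'') spanner, and for lightness the greedy spanner. Fix a point set $P \subset \mathbb{R}^d$ with $|P| = n$ and a parameter $\theta = \Theta(\eps)$. Partition $\mathbb{R}^d$ into $O(\theta^{-(d-1)}) = O(\eps^{-(d-1)})$ cones of angular diameter at most $\theta$, re-center this partition at each $p \in P$, and for each nonempty cone at $p$ add to $H$ the edge from $p$ to the point of $P$ in that cone closest to $p$. This makes $|E(H)| = O(n\,\eps^{-(d-1)})$ immediately. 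For the stretch I would induct on interpoint distance: given $u,v$, let $w$ be the point chosen in the cone at $u$ that contains $v$; narrowness of the cone gives $\|u-w\| + \|w-v\| \le (1 + O(\theta))\|u-v\|$ together with $\|w-v\| \le (1 - \Omega(1))\|u-v\|$, so the residual distance contracts geometrically and summing the series yields stretch $1 + O(\theta)$, which is $1 + \eps$ after rescaling $\theta$.

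For lightness I would analyze the greedy $(1+\eps)$-spanner: process candidate edges in nondecreasing length and add $(u,v)$ iff the current $H$ has no $u$--$v$ path of length at most $(1+\eps)\|u-v\|$. This enforces the \emph{leapfrog property} --- no short chain of $H$-edges can approximately shortcut another $H$-edge --- which limits how many near-parallel greedy edges of comparable length can crowd into a small ball. Normalizing $w(\mst(P)) = 1$, I would sort the greedy edges into length classes $E_i$ (lengths in $[2^i, 2^{i+1})$), and for a fixed class combine a volume/packing argument with the leapfrog property to charge $w(E_i)$ against a $\Theta(\eps 2^i)$-scale cluster decomposition of the MST (whose total weight is $O(1)$), obtaining $w(E_i) = O(\eps^{-d})$. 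The crucial and genuinely delicate step is then to show that only $O(\log \eps^{-1})$ classes contribute nontrivially: for length scales much coarser than the ``feature scale'' of the point set, contracting the MST subtrees that define the cluster decomposition collapses the configuration, so the contributions telescope rather than summing over all $\Theta(\log(\text{spread}))$ scales. This yields lightness $O(\eps^{-d} \log \eps^{-1})$.

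For the lower bounds I would build gadgets of $\Theta(\eps^{-(d-1)})$ points positioned so that every detour through a third gadget point costs strictly more than a $(1+\eps)$ factor --- roughly, points in sufficiently convex position so that no three are near-collinear at resolution $\eps$. Any $(1+\eps)$-spanner must then retain essentially all $\Theta(\eps^{-2(d-1)})$ pairwise edges of each gadget; taking $\Theta(n\,\eps^{d-1})$ disjoint gadgets forces $\Omega(n\,\eps^{-(d-1)})$ edges, giving the sparsity lower bound. For lightness I would use the same gadgets and exploit that such a ``sparse-but-detour-heavy'' point set has MST weight a $\Theta(1/\eps)$ factor smaller than the total length of the edges it forces, pushing the lightness to $\Omega(\eps^{-d})$.

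The step I expect to be hardest is the lightness upper bound: replacing the easy $O(\log(\text{spread}))$ count of contributing length scales by the sharp $O(\log \eps^{-1})$ is where essentially all of the work of \cite{LS22} lives. Making the lower-bound gadgets quantitatively tight --- both the ``no cheap detour'' property and the exact exponent in the MST-versus-spanner-weight gap --- is the second main obstacle.
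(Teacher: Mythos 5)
This theorem is quoted from Le and Solomon \cite{LS22} in the related-work section; the paper contains no proof of it, so I can only evaluate your proposal on its own terms. Your sparsity upper bound via Yao/$\Theta$-graphs with $O(\eps^{-(d-1)})$ cones is the standard and correct route (one quibble: the chosen point $w$ can be arbitrarily close to $u$, so $\|w-v\|$ need not contract by a constant factor; the usual fix is induction on the rank of $\|w-v\|$ among pairwise distances with the potential $\|u-w\| + t\|w-v\| \le t\|u-v\|$). Your lightness paragraph correctly identifies the greedy spanner and the leapfrog property, but the step you flag as hard --- getting $O(\log \eps^{-1})$ contributing scales instead of $O(\log(\text{spread}))$, or instead of the classical $\eps^{-2d}$-type bounds --- is the entire content of the upper-bound half of \cite{LS22}, so that part is an outline of where a proof would go rather than a proof.

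The genuine gap is in the lower bounds. Your gadget requires $\Theta(\eps^{-(d-1)})$ points such that \emph{every} pair needs a direct edge, i.e., for all distinct $u,v,w$ one has $\|u-w\|+\|w-v\| > (1+\eps)\|u-v\|$. No such set of that size exists: the forbidden region for a pair $(u,v)$ is the ellipsoid with foci $u,v$ and distance-sum $(1+\eps)\|u-v\|$, which has width $\Theta(\sqrt{\eps}\,\|u-v\|)$ in each of the $d-1$ minor directions, and a packing computation shows that a point set avoiding all of these ellipsoids has at most $\Theta(\eps^{-(d-1)/2})$ points (achieved, e.g., by a $c\sqrt{\eps}$-separated set on a sphere: a detour with angular deviation $\sqrt{\eps}$ costs only a $1+O(\eps)$ factor, so an edge ``serves'' a cone of angular radius $\sqrt{\eps}$ around it). A clique gadget therefore yields only $\Omega(\eps^{-(d-1)/2})$ sparsity --- quadratically short of the claimed $\Omega(\eps^{-(d-1)})$ --- and the same square-root loss hits your lightness lower bound. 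Closing exactly this quadratic gap is the main contribution of the lower-bound half of \cite{LS22}, which does not proceed by forcing a clique on a single gadget but by a global counting argument over a structured instance in which each point needs many edges to points it is not directly ``independent'' from. As written, your construction proves a weaker statement than the theorem asserts.
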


One can again ask whether it is possible to pay more stretch in exchange for a better dependence on $\eps$ or $d$.
This question has been studied, but so far remains open.
Aronov et al.~\cite{ADCGHSV08} showed that one can pay a \emph{much} larger stretch of $O(n/k)$ in exchange for an \emph{extremely} sparse spanner with number of edges $|E(H)| \le n-1+k$.
Work of Filtser and Neiman~\cite{FN22} (c.f.\ Remark 1) implied that one can pay $O(t)$ stretch for sparsity and lightness $\exp(d/t^2) \cdot \polylog n$.
However, a general tradeoff remains elusive.

Euclidean graphs are naturally generalized by \emph{doubling metrics}, where analogous sparsity/lightness bounds for $(1+\eps)$-spanners, e.g.~\cite{Gottlieb15, BLW19b, FS20}.
Extending these results to higher stretch would be interesting as well.

\section{Preliminaries}

\paragraph{Graph Notation.}  
Given a graph $G$ and an edge $e \in E(G)$, we let $G-e$ denote the graph obtained by removing $e$ from $G$. We denote the weight of an edge $e$ in $G$ by $w(e)$. Given a pair of vertices $s, t \in V(G)$, we let $\dist_G(s, t)$ denote the weighted distance between $s$ and $t$ in graph $G$.

Our proofs will require a standard theorem in the graph spanner literature that bounds the number of edges in large girth graphs. 

\begin{theorem}[Moore bounds]
\label{thm:moore_bounds}
    For all positive integers $k, n$, every $n$-vertex graph with girth $>2k$ has at most $O(n^{1+1/k})$ edges.
\end{theorem}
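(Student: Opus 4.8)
The plan is to prove \Cref{thm:moore_bounds} by the classical ball-growth argument. Let $G$ be an $n$-vertex, $m$-edge graph with $\gir(G) > 2k$; the goal is $m = O(n^{1+1/k})$. The case $k = 1$ is trivial (since $m \le \binom{n}{2}$), so assume $k \ge 2$. First I would reduce to the case of large minimum degree: it is folklore that every $n$-vertex, $m$-edge graph has a nonempty subgraph $H$ with minimum degree $\delta \ge m/n$ --- repeatedly delete any vertex of current degree less than $m/n$; each deletion destroys fewer than $m/n$ edges, so the whole process destroys fewer than $m$ edges and hence cannot exhaust all $n$ vertices. Passing to a subgraph does not decrease the girth, so $\gir(H) > 2k$ as well; and if $\delta \le 1$ then $m \le \delta n \le n$ and we are done, so assume $\delta \ge 2$.

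Next I would pick any $v \in V(H)$, run breadth-first search from $v$, and set $B_i := \{x \in V(H) : \dist_H(v,x) = i\}$ for $0 \le i \le k$. Using $\gir(H) > 2k$, I would prove two structural facts: (i) for every $1 \le i \le k$, each vertex of $B_i$ has at most one neighbor in $B_{i-1}$; and (ii) for every $0 \le i \le k-1$, $B_i$ is an independent set. For (i): if $u \in B_i$ had distinct neighbors $w_1, w_2 \in B_{i-1}$, let $z$ be the common vertex of a shortest $v$--$w_1$ path and a shortest $v$--$w_2$ path that is farthest from $v$; the two path-segments from $z$ to $w_1$ and from $z$ to $w_2$ meet only at $z$, and together with the edges $w_1 u$ and $u w_2$ they form a simple cycle of length at most $2i \le 2k$, contradicting $\gir(H) > 2k$. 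Fact (ii) is analogous, with the edge inside $B_i$ replacing the detour through $u$, yielding a cycle of length at most $2i+1 \le 2k-1$. Given (i) and (ii), every vertex of $B_i$ with $1 \le i \le k-1$ sends at least $\delta-1$ of its $\ge \delta$ incident edges into $B_{i+1}$, and $v$ sends all $\delta$ of its edges into $B_1$; counting the edges between $B_i$ and $B_{i+1}$ from the $B_{i+1}$ side via (i) gives $|B_{i+1}| \ge (\delta-1)|B_i|$ for $1 \le i \le k-1$, together with $|B_1| \ge \delta$, and hence $|B_k| \ge \delta(\delta-1)^{k-1}$.

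Finally I would combine the estimates: $n \ge |V(H)| \ge |B_k| \ge \delta(\delta-1)^{k-1} \ge (\delta-1)^k$, so $\delta \le n^{1/k}+1$, and therefore $m \le \delta n \le (n^{1/k}+1)n = O(n^{1+1/k})$. I expect the one genuinely delicate point to be the cycle extraction in (i) and (ii): one must verify that taking the common ancestor farthest from $v$ really produces a \emph{simple} cycle --- the two segments are internally disjoint by maximality of that vertex, and the two endpoints are distinct --- and that the resulting length is \emph{strictly} below the girth, which is exactly where the search depth $k$ and the strict inequality in ``$\gir > 2k$'' are used. The rest is routine counting.
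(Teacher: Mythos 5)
Your argument is the standard ball-growth proof of the Moore bound and it is correct: the reduction to minimum degree $\delta \ge m/n$, the two girth-based structural facts about BFS layers (verified carefully, including simplicity of the extracted cycles via the farthest common vertex), and the resulting growth $|B_{i+1}| \ge (\delta-1)|B_i|$ all check out. The paper states this theorem as a known result and gives no proof, so there is nothing to compare against; your writeup is a valid self-contained proof of the cited fact.
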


Erd\H{o}s conjectured that this upper bound is tight; this is frequently referred to as the girth conjecture. 

\begin{conjecture}[Erd\H{o}s girth conjecture] \label{conj:girth}
    The Moore bounds are tight. That is, for every positive integer $k$ and sufficiently large positive integer $n$,  there exists an $n$-vertex graph with girth $>2k$ and $\Omega(n^{1+1/k})$ edges.
\end{conjecture}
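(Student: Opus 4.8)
The conjecture is famously open, so what follows is the standard line of attack rather than a complete argument. The plan is to exhibit, for each $k$ and infinitely many $n$, an $n$-vertex graph of girth $>2k$ with $\Omega(n^{1+1/k})$ edges; the remaining values of $n$ are then handled by taking a disjoint union of such a graph with $o(n)$ isolated vertices, which changes neither the girth nor the $\Omega(n^{1+1/k})$ edge count. The natural source of extremal constructions is finite incidence geometry: from a point--line incidence structure in which every point lies on $r$ lines, every line carries $r$ points, and the shortest closed point--line configuration has combinatorial length $g$, one forms the bipartite Levi (incidence) graph whose vertices are the points and the lines and whose edges are the incident pairs; this graph is $r$-regular on $n$ vertices with $rn/2$ edges and has girth $2g$. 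When the structure is a generalized $(k+1)$-gon of order $q$, this yields $n=\Theta(q^{k})$, girth $2(k+1)>2k$, and $\Theta(q^{k+1})=\Theta(n^{1+1/k})$ edges, matching Theorem~\ref{thm:moore_bounds} exactly.

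First I would invoke the classical generalized polygons for which such structures are known to exist: the incidence graph of the projective plane $\mathrm{PG}(2,q)$ (girth $6$, settling $k=2$), of a generalized quadrangle of order $q$ (girth $8$, $k=3$), and of a split Cayley generalized hexagon of order $q$ (girth $12$, $k=5$); combined with $K_n$ for the trivial case $k=1$, this proves Conjecture~\ref{conj:girth} for $k\in\{1,2,3,5\}$. For all other $k$ I would fall back on the two known general-purpose lower bounds. The first is the probabilistic deletion method: sample $G(n,p)$ with $p\approx n^{-(2k-2)/(2k-1)}$, so that the expected number of cycles of length at most $2k$ is at most half the expected number of edges; deleting one edge from each short cycle leaves a graph of girth $>2k$ with $\Omega(n^{1+1/(2k-1)})$ edges. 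The second is the explicit algebraic family $D(m,q)$ of Lazebnik, Ustimenko and Woldar, whose girth grows linearly in $m$ and which, after tuning $m$ to the desired girth, gives girth-$(>2k)$ graphs whose edge exponent slightly improves on the random bound for some ranges of $k$. Both, however, fall short of the conjectured exponent $\tfrac1k$.

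The hard part---and the reason this remains a conjecture---is precisely closing that gap for the missing values of $k$. The Feit--Higman theorem forbids thick finite generalized $m$-gons except for $m\in\{2,3,4,6,8\}$, and the surviving case $m=8$ is forced to have unbalanced parameters $(q,q^2)$ whose incidence graph does not attain the Moore exponent; consequently the incidence-geometry route cannot reach $k=4$ or any $k\ge 6$, and no alternative construction achieving $n^{1+1/k-o(1)}$ edges with girth $>2k$ is known there. A full proof would require a genuinely new family of dense high-girth graphs that sidesteps these obstructions---none is presently available---which is why this paper, like all prior work on the subject, can only adopt Conjecture~\ref{conj:girth} as a hypothesis for its lower-bound constructions rather than establish it.
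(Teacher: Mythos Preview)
You correctly identify that this is an open conjecture, and indeed the paper makes no attempt to prove it: it is stated as Conjecture~\ref{conj:girth} and used purely as a hypothesis under which the lower bounds in Theorems~\ref{thm:sparse_spanner} and the main lower-bound theorem are derived. Your summary of the known partial results (the generalized-polygon constructions for $k\in\{1,2,3,5\}$, the probabilistic deletion bound, the Lazebnik--Ustimenko--Woldar graphs) and of the Feit--Higman obstruction is accurate and goes well beyond anything the paper itself says; the paper simply assumes the conjecture and moves on.
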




\section{Sparse Multiplicative Spanners in Minor-Free Graphs}
\label{sec: size}

In this section we will prove that $K_h$-minor-free graphs have sparse multiplicative spanners. 

\begin{restatable}[Minor-Free Spanner Sparsity]{theorem}{sparsespanner} 
For all positive integers $k, h, n$, every $K_h$-minor-free weighted graph on $n$ vertices admits a $(2k-1)$-multiplicative spanner with number of edges
$$O\left(n \cdot h^{\frac{2}{k+1}} \cdot \text{\normalfont polylog }h \right) $$
Moreover, this bound is tight up to $\text{\normalfont polylog }h$ factors under the   girth conjecture. 
\label{thm:sparse_spanner}
\end{restatable}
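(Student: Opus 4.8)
The plan is to derive the upper bound from the greedy spanner together with the density increment theorem (Theorem~\ref{thm: density increment}), and the lower bound from the girth conjecture applied at the right scale. For the upper bound, run the standard greedy $(2k-1)$-spanner algorithm on $G$: process the edges in nondecreasing order of weight, adding $(u,v)$ to $H$ exactly when $\dist_H(u,v) > (2k-1)\,w(u,v)$. The output $H$ is a $(2k-1)$-spanner, and $H \subseteq G$ is $K_h$-minor-free. Moreover $H$ has girth $>2k$: if $C \subseteq H$ were a cycle with at most $2k$ edges, then at the moment its last-added edge $e=(u,v)$ was inserted, the remaining edges of $C$ already formed a $u$--$v$ path with at most $2k-1$ edges, each of weight at most $w(e)$, so $\dist_H(u,v) \le (2k-1)w(e)$ and $e$ would have been skipped. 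Let $d$ be the average degree of $H$; we may assume $d \ge h^{\frac{2}{k+1}}$ (otherwise there is nothing to prove, and in this regime the density increment is non-vacuous since $d = O(h\sqrt{\log h})$ forces $h^2/d \ge 1$). Apply Theorem~\ref{thm: density increment} to obtain $H' \subseteq H$ on at most $(h^2/d)\cdot\polylog h$ vertices with average degree at least $d/\polylog h$; since $H' \subseteq H$ it still has girth $>2k$, so the Moore bounds (Theorem~\ref{thm:moore_bounds}) force its average degree to be $O(|V(H')|^{1/k})$. Chaining the two estimates,
$$\frac{d}{\polylog h} \;\le\; O\!\left(\Big(\tfrac{h^2}{d}\cdot\polylog h\Big)^{1/k}\right),$$
which rearranges to $d^{(k+1)/k} = O\!\left(h^{2/k}\cdot\polylog h\right)$, hence $d = O\!\left(h^{\frac{2}{k+1}}\cdot\polylog h\right)$, and therefore $|E(H)| = \tfrac{dn}{2} = O\!\left(n\cdot h^{\frac{2}{k+1}}\cdot\polylog h\right)$.

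For the lower bound, the key observation is that a graph of girth $>2k$ containing a $K_h$-minor must have many vertices: the $h$ pairwise disjoint, pairwise adjacent branch sets of the minor contribute at least $\binom{h}{2}$ distinct edges, while girth $>2k$ caps the total edge count at $O(|V|^{1+1/k})$ by the Moore bounds; combining, $|V| = \Omega\!\left(h^{\frac{2k}{k+1}}\right)$. Thus there is an absolute constant $c_0>0$ such that every graph on at most $N := \lfloor c_0\, h^{\frac{2k}{k+1}}\rfloor$ vertices with girth $>2k$ is $K_h$-minor-free. Assuming the girth conjecture (Conjecture~\ref{conj:girth}), fix an $N$-vertex graph $\Gamma$ of girth $>2k$ with $\Omega(N^{1+1/k}) = \Omega(h^2)$ edges; then $\Gamma$ is $K_h$-minor-free and has average degree $\Omega(N^{1/k}) = \Omega\!\left(h^{\frac{2}{k+1}}\right)$. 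Let $G^*$ be the disjoint union of $n/N$ copies of $\Gamma$ (assuming $N \mid n$ for simplicity), with all edges of unit weight: this is an $n$-vertex $K_h$-minor-free graph with $\Omega\!\left(n\cdot h^{\frac{2}{k+1}}\right)$ edges. Since $G^*$ has girth $>2k$, deleting any edge $(u,v)$ leaves $\dist_{G^*-(u,v)}(u,v) \ge 2k > 2k-1 = (2k-1)w(u,v)$, so that edge cannot be omitted from any $(2k-1)$-spanner; hence $G^*$ is its own unique $(2k-1)$-spanner, which matches the upper bound up to $\polylog h$ factors.

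The main obstacle is conceptual rather than technical. Once Theorem~\ref{thm: density increment} is granted, the upper bound is essentially mechanical; the only points requiring care are the weighted-girth argument for the greedy spanner and restricting to the regime $d \ge h^{\frac{2}{k+1}}$ in which the density increment is meaningful. For the lower bound, the crux is the slightly counterintuitive fact that high-girth graphs only become $K_h$-minor-free once they shrink below $\Theta\!\left(h^{\frac{2k}{k+1}}\right)$ vertices, rather than the $\Theta(h)$ one might first guess — this is exactly why the correct exponent is $\frac{2}{k+1}$ and not the $\frac{1}{k}$ of Theorem~\ref{thm:genspan} — and noticing that this is precisely the scale at which the girth-conjecture construction still retains average degree $\Omega\!\left(h^{\frac{2}{k+1}}\right)$.
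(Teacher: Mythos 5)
Your proposal is correct and follows essentially the same route as the paper: the greedy $(2k-1)$-spanner plus the girth argument, the density increment theorem combined with the Moore bounds to bound the average degree by $O(h^{2/(k+1)}\cdot\text{polylog } h)$, and for the lower bound a disjoint union of girth-conjecture graphs on $\Theta(h^{2k/(k+1)})$ vertices each, with $K_h$-minor-freeness of each component certified by its having fewer than $\binom{h}{2}$ edges. The only (immaterial) differences are that you enforce the $<\binom{h}{2}$ edge count via the Moore bound and a small constant $c_0$ rather than by explicitly deleting edges, and you omit the trivial $h=O(1)$ case.
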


Let $G$ be an $n$-vertex $K_h$-minor-free weighted graph, and let $k$ be a positive integer. We will construct a $(2k-1)$-spanner $H \subseteq G$ using a greedy strategy that is standard in the area. We first present the construction of $H$, and then analyze its correctness and size. 

\paragraph{Construction of $H$.}  We may assume without loss of generality that for each edge $(u, v) \in E(G)$, $w((u, v)) = \dist_G(u, v)$ by removing edges from $G$ that do not satisfy this condition.
Initially, let $H \gets (V, \emptyset)$.  For each edge $(u, v) \in E$ in order of non-decreasing weight, if $\dist_H(u, v) > (2k-1)  \cdot \dist_G(u, v)$, then add edge $(u, v)$ to $H$. After iterating through all edges in $E$, return $H$. 

We  quickly verify that $H$ is indeed a multiplicative spanner of $G$. 

\begin{claim}[Correctness]
    $H$ is a $(2k-1)$-spanner of $G$. 
    \label{clm:correct}
\end{claim}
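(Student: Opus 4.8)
The plan is to follow the textbook analysis of the greedy spanner. The key reduction is that it suffices to verify the stretch condition edge-by-edge: if every edge $(u,v)\in E(G)$ satisfies $\dist_H(u,v)\le (2k-1)\cdot\dist_G(u,v)$, then the same bound holds for all pairs $s,t\in V(G)$. To see this, fix $s,t$ and take a shortest path $s=x_0,x_1,\dots,x_\ell=t$ in $G$; then $\dist_G(s,t)=\sum_{i=1}^{\ell} w((x_{i-1},x_i))$, and by assumption each hop satisfies $\dist_H(x_{i-1},x_i)\le (2k-1)\, w((x_{i-1},x_i))$, so the triangle inequality in $H$ gives
$$\dist_H(s,t)\le \sum_{i=1}^{\ell}\dist_H(x_{i-1},x_i)\le (2k-1)\sum_{i=1}^{\ell} w((x_{i-1},x_i))=(2k-1)\dist_G(s,t).$$
(The matching lower bound $\dist_H(s,t)\ge\dist_G(s,t)$ is immediate from $H\subseteq G$, though it is not needed for the spanner property.)

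Next I would verify the per-edge condition by a case analysis on whether $(u,v)$ was inserted into $H$ during the greedy sweep, using the preprocessing guarantee that $w((u,v))=\dist_G(u,v)$ for every retained edge. If $(u,v)$ was inserted, then $(u,v)\in E(H)$, so $\dist_H(u,v)\le w((u,v))=\dist_G(u,v)\le (2k-1)\dist_G(u,v)$. If $(u,v)$ was not inserted, then at the moment it was examined the current partial spanner $H'$ already satisfied $\dist_{H'}(u,v)\le (2k-1)\dist_G(u,v)$ — this is exactly the negation of the insertion trigger. Since the algorithm only ever adds edges, $E(H')\subseteq E(H)$, and hence $\dist_H(u,v)\le \dist_{H'}(u,v)\le (2k-1)\dist_G(u,v)$, as desired.

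There is essentially no obstacle here; the only point worth stating carefully is the monotonicity observation that distances in $H$ can only decrease as edges are added, which is what lets us transfer the inequality recorded at insertion time to the final graph $H$. Note that the non-decreasing weight processing order is irrelevant to correctness — it will only matter later, for controlling the girth and hence the sparsity/lightness of $H$ — so I would not invoke it in this claim.
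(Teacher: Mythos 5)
Your proposal is correct and follows the same route as the paper's proof: reduce to the per-edge stretch condition, verify it by the was-inserted/was-not-inserted dichotomy using the greedy trigger and monotonicity of distances under edge additions, and then extend to all pairs via the triangle inequality along a shortest path. The paper's version is just a more terse rendering of exactly this argument.
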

\begin{proof}
    This proof follows from a standard argument that can be attributed to~\cite{ADDJS93}. For each edge $(u, v) \in E(G)$, we know that either $\dist_H(u, v) \leq (2k-1) \cdot \dist_G(u, v)$ or $(u, v) \in E(H)$. Since $w((u, v)) = \dist_G(u, v)$, we conclude that for each edge $(u, v) \in E(G)$, $\dist_H(u, v) \le (2k-1) \cdot \dist_G(u, v)$. Combining this fact with the triangle inequality, it follows that for all $s, t \in V(G)$, $\dist_H(s, t) \leq (2k-1) \cdot \dist_G(s, t)$. 
\end{proof}

\paragraph{Size Analysis of $H$.}

Next we prove our desired upper bound on the number of edges in $H$. We begin with the following observation about $H$.

\begin{claim}
    $H$ is a $K_h$-minor-free graph with girth $>2k$. 
    \label{clm:size}
\end{claim}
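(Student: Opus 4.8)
The claim has two parts. The $K_h$-minor-free part is immediate: $H$ is a subgraph of $G$, and $G$ is $K_h$-minor-free, so $H$ is $K_h$-minor-free (minor-freeness is closed under taking subgraphs).

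The girth part is the interesting one. I need to show $H$ has no cycle of length $\le 2k$. Suppose toward contradiction that $C$ is a cycle in $H$ with at most $2k$ edges. Among the edges of $C$, let $e = (u,v)$ be the one that was added to $H$ last by the greedy algorithm (i.e. the one of largest weight, breaking ties by insertion order). At the moment just before $e$ was added, all the other edges of $C$ were already present in $H$, so they form a $u$-$v$ path $P$ in $H$ of length (number of edges) at most $2k-1$. Since $e$ was added last, every edge $e'$ on $P$ has $w(e') \le w(e) = \dist_G(u,v)$ (here I use that $w(e) = \dist_G(u,v)$ by the WLOG preprocessing, and the greedy order is by non-decreasing weight). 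Therefore $\dist_H(u,v) \le w(P) = \sum_{e' \in P} w(e') \le (2k-1) \cdot \dist_G(u,v)$, using that $P$ has at most $2k-1$ edges each of weight at most $\dist_G(u,v)$. But this contradicts the fact that the greedy algorithm added $e$, which it only does when $\dist_H(u,v) > (2k-1)\cdot\dist_G(u,v)$ at that time. Hence no such cycle $C$ exists, and $H$ has girth $>2k$.

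The only subtlety — and the step I'd be most careful about — is the tie-breaking and the direction of the inequality on edge weights along $P$. One must take $e$ to be the \emph{last} edge of $C$ inserted, not merely a maximum-weight edge, so that at insertion time of $e$ the rest of $C$ is genuinely present; and then the non-decreasing weight order guarantees $w(e') \le w(e)$ for all other $e' \in C$, which is exactly what's needed to bound $w(P)$. Everything else is a direct application of the definitions in the construction of $H$.
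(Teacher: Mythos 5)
Your proof is correct and follows essentially the same standard greedy-spanner argument as the paper: isolate a heaviest (equivalently, last-inserted) edge $e$ of a short cycle, bound the weight of the remaining path by $(2k-1)\cdot w(e)$, and contradict the greedy insertion condition. Your explicit tie-breaking by insertion order is in fact slightly more careful than the paper's phrasing, which picks a maximum-weight edge without addressing ties, but the substance is identical.
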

\begin{proof}
Since $G$ is $K_h$-minor-free, $H$ is as well. We can prove $H$ has girth $>2k$ using a standard argument that we can attribute to~\cite{ADDJS93}. Let $C$ be a simple cycle in $H$. We will show $|C| > 2k$. Consider the edge $e \in C$ with highest weight $w(e)$. The path between the endpoints of edge $e$ in $C-e$ has length at most $(|C|-1) \cdot w(e)$. By the construction of $H$, we would not add edge $e$ to $H$ if  $(|C|-1) \cdot w(e) \le (2k-1) \cdot w(e)$. We conclude that $|C| > 2k$. 
\end{proof} 

Now applying the density increment theorem, we obtain the following general theorem about the number of edges in minor-free graphs with large girth.

\begin{lemma}
    For all positive integers $k, h, n$, every $n$-vertex $K_h$-minor-free graph with girth $>2k$ has at most $O(n \cdot h^{\frac{2}{k+1}} \cdot \text{\normalfont polylog } h)$ edges.
    \label{lem:size}
\end{lemma}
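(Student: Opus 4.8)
The plan is to combine the greedy spanner's girth guarantee with the density increment theorem of Postle (Theorem~\ref{thm: density increment}) and the classical Moore bounds (Theorem~\ref{thm:moore_bounds}). Let $G'$ be an arbitrary $n$-vertex $K_h$-minor-free graph with girth $>2k$, and write $d$ for its average degree; our goal is to show $d = O(h^{2/(k+1)} \cdot \polylog h)$, since this immediately bounds $|E(G')| = nd/2$ by the claimed quantity. If $d$ is bounded by a constant there is nothing to prove, so assume $d$ is large.

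First I would apply Theorem~\ref{thm: density increment} to $G'$: it produces a subgraph $G'' \subseteq G'$ on at most $N := (h^2/d)\cdot \polylog h$ vertices with average degree at least $d/\polylog h$. Crucially, $G''$ is a subgraph of $G'$ and hence inherits girth $>2k$. Now apply the Moore bound (Theorem~\ref{thm:moore_bounds}) to $G''$: a graph on $N$ vertices with girth $>2k$ has $O(N^{1+1/k})$ edges, so its average degree is $O(N^{1/k})$. Combining the two bounds on the average degree of $G''$ gives
$$\frac{d}{\polylog h} \;\le\; O\!\left(N^{1/k}\right) \;=\; O\!\left(\left(\frac{h^2}{d}\cdot \polylog h\right)^{1/k}\right).$$
Rearranging, $d^{1+1/k} \le O(h^{2/k} \cdot \polylog h)$, i.e. $d^{(k+1)/k} \le O(h^{2/k}\cdot \polylog h)$, and raising both sides to the power $k/(k+1)$ yields $d \le O(h^{2/(k+1)} \cdot \polylog h)$, as desired. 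Multiplying by $n/2$ gives the edge bound.

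The only subtlety — and the place I would be most careful — is bookkeeping the $\polylog h$ factors through the exponentiation: the density increment loses a $\polylog h$ factor both in the vertex count and in the average degree, and these get raised to the $1/k$ and the $k/(k+1)$ powers respectively, but since $k \ge 1$ all of these powers are at most $1$, so a $\polylog h$ factor stays a $\polylog h$ factor (with a worse constant in the exponent of the log, absorbed into the $\polylog$ notation). A second small point is the degenerate regime: if $N < 1$, i.e. $d$ is already larger than $h^2 \cdot \polylog h$, then the density increment theorem is vacuous, but in that regime the bound $d = O(h^{2/(k+1)}\cdot\polylog h)$ can fail only if $h^2 \cdot \polylog h < h^{2/(k+1)}\cdot \polylog h$, which is false for $k \ge 1$; alternatively one notes $d \le n-1$ always and handles small $n$ directly, or simply observes that a $K_h$-minor-free graph already has $d = O(h\sqrt{\log h}) = O(h^2 \polylog h)$, so the increment theorem is never vacuous in a way that matters. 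I would state the argument cleanly for large $d$ and dispatch the bounded-$d$ case in one line.

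Finally, to obtain Theorem~\ref{thm:sparse_spanner} from Lemma~\ref{lem:size}: Claim~\ref{clm:correct} shows the greedy output $H$ is a $(2k-1)$-spanner, Claim~\ref{clm:size} shows it is $K_h$-minor-free with girth $>2k$, and Lemma~\ref{lem:size} then bounds $|E(H)|$; the tightness under the girth conjecture follows from the lower-bound construction (a disjoint union of girth-$>2k$ extremal graphs on roughly $h^2/d$ vertices each, assembled to be $K_h$-minor-free), matching the exponent up to $\polylog h$ factors.
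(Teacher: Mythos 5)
Your proposal is correct and follows essentially the same route as the paper's proof: apply the density increment theorem to pass to a small subgraph of comparable average degree, note that it inherits girth $>2k$, apply the Moore bounds, and solve for $d$. The extra care you take with the degenerate regime and the $\polylog h$ bookkeeping is sound but not needed beyond what the paper already does implicitly.
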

\begin{proof}
    Let $H$ be an  $n$-vertex $K_h$-minor-free graph with girth $>2k$ and average degree $d$. Then by the Density Increment Theorem, there exists a subgraph $H' \subseteq H$ with at most $|V(H')| \leq (h^2/d) \cdot \text{\normalfont polylog } h$ vertices and  at least $|E(H')| \geq |V(H')| \cdot \frac{d}{\text{\normalfont polylog }h}$. Subgraph $H'$ has girth $>2k$, so by Theorem \ref{thm:moore_bounds},
    $$
    \frac{d}{\text{\normalfont polylog }h} \cdot |V(H')| \leq |E(H')| \leq O(|V(H')|^{1+1/k}).
    $$
    Plugging in $|V(H')| \leq (h^2/d) \cdot \text{\normalfont polylog } h$ yields the inequality
    \begin{align*}
        \frac{d}{\text{\normalfont polylog }h}  & \leq (h^2/d)^{1/k} \cdot \text{\normalfont polylog }h,          
    \end{align*}
    and solving for $d$ gives us
    $$
    d \leq h^{2/(k+1)}  \cdot \text{\normalfont polylog }h. 
    $$
\end{proof}

We can now complete the proof of Theorem \ref{thm:sparse_spanner}.

\sparsespanner*
\begin{proof}
    By Claim \ref{clm:correct}, Claim \ref{clm:size}, and Lemma \ref{lem:size},  subgraph $H$  is a $(2k-1)$-spanner of $G$ of size $|E(H)| = O(n \cdot h^{2/(k+1)} \cdot \text{\normalfont polylog }h)$. What remains is to prove that this upper bound is tight up to $\text{\normalfont polylog } h$ factors under the girth conjecture (Conjecture \ref{conj:girth}).

    Formally, we will show that for all positive integers $k, h,$ and for sufficiently large  $n$, there exists an $n$-vertex $K_h$-minor-free graph $G$ that does not admit a $(2k-1)$-spanner $H$ of $G$ of size $|E(H)| = \Omega(n \cdot h^{2/(k+1)})$. 
    If $h = O(1)$, then we can let $G$ be a tree. Every spanner $H$ of $G$ must include every edge of the tree, so $|E(H)| = \Omega(n)$, as desired. Otherwise, if $h$ is sufficiently large, then by the girth conjecture there exists a graph $G^*$ with  $|V(G^*)| = \Theta(h^{2k/(k+1)})$ vertices,  $|E(G^*)| \geq \Omega(h^2)$ edges, and girth $>2k$. Moreover, we can guarantee that $|E(G^*)| < {h \choose 2}$ by removing edges from $G^*$.

    We compose our final lower bound graph $G$ by unioning $ n / |V(G^*)|  = \Theta\left( \frac{n}{h^{2k/(k+1)}}\right)$ disjoint copies of graph $G^*$. Graph $G$ will have $\Theta(n)$ vertices (we can ensure $|V(G)| = n$ by deleting or adding vertices as needed). Graph $G$ has girth $>2k$ and is $K_h$-minor-free, since every connected component of $G$ has fewer than ${h \choose 2}$ edges. Since graph $G$ has girth $>2k$, every $(2k-1)$-spanner $H$ of $G$ must contain every edge in $G$, i.e., $H = G$. We conclude that
    $$
    |E(H)| = |E(G)| = \Theta\left( \frac{n}{h^{2k/(k+1)}}\right) \cdot h^2  = \Theta(n\cdot h^{2/(k+1)}),
    $$
    as claimed.
\end{proof}

\section{Light Multiplicative Spanners in Minor-Free Graphs}

In this section we will prove that $K_h$-minor-free graphs have multiplicative spanners with small lightness.

\begin{restatable}[Minor-Free Spanner Lightness]{theorem}{lightspanner} 
For all positive integers $k, h, n$, and sufficiently small $\varepsilon > 0$, every $K_h$-minor-free weighted graph on $n$ vertices admits a $(1+\varepsilon)(2k-1)$-multiplicative spanner with lightness $$O\left(h^{\frac{2}{k+1}} \cdot \text{\normalfont polylog}(h) \cdot \frac{1}{\eps^3} \log \frac{1}{\eps}\right).$$ 
\label{thm:light_spanner}
\end{restatable}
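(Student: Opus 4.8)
The plan is to run the greedy algorithm with a slightly inflated stretch, $t := (1+c\eps)(2k-1)$ for a suitable absolute constant $c$, and then to bound the weight of the resulting spanner $H$ scale by scale, feeding our new density bound (Lemma~\ref{lem:size}) into the clustering framework of Borradaille, Le, and Wulff-Nilsen~\cite{BLW17} in place of the Kostochka-type sparsity bound that they use.

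First I would record the structure of $H$. Run greedy with stretch $t$ and then add a fixed minimum spanning tree $T$ of $G$; this adds only $w(\mst(G))$ to $w(H)$ (i.e.\ $1$ to the lightness) and keeps $H$ a $(1+c\eps)(2k-1)$-spanner. Exactly as in Claim~\ref{clm:size}, $H$ is $K_h$-minor-free; and, by the usual exchange argument (if $e$ is the heaviest edge of a cycle $C$ all of whose edges are greedily chosen, then $C-e$ is a path between the endpoints of $e$ present when $e$ was inserted, so $w(C)-w(e)=w(C-e)\ge\dist_{H}(u,v) > t\cdot w(e)$), every such cycle satisfies the \emph{weighted girth} bound $w(C) > (t+1)\max_{e\in C}w(e)$ (Definition~\ref{def:wtdgirth}). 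Since $G$ is metric, every edge weight lies in $(0, w(\mst(G))]$, and the edges of weight below $\poly(\eps/(nh))\cdot w(\mst(G))$ together contribute at most $\eps\cdot w(\mst(G))$ and may be discarded.

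Next I would set up the level/cluster decomposition of \cite{BLW17}. Group the surviving edges into geometric \emph{levels} of ratio $1+\eps$, with level $i$ holding edges of weight in $(\mu_i,(1+\eps)\mu_i]$; there are $O(\eps^{-1}\log(nh/\eps))$ levels. Build the BLW cluster hierarchy: at level $i$ the vertices are partitioned into clusters, each a connected subgraph of $H$ assembled from $T$-edges and spanner edges of level $<i$, with weighted diameter $O(\eps\mu_i)$; let $H_i$ be the minor of $H$ obtained by contracting the level-$i$ clusters and keeping only the non-tree level-$i$ spanner edges between distinct clusters. The point of the weighted-girth bound together with the $O(\eps\mu_i)$ diameter control is that a cycle of length $g$ in $H_i$ lifts to a simple cycle $\widehat C\subseteq H$ whose heaviest edge is a greedily-chosen level-$i$ edge (so $\max_{e\in\widehat C}w(e) > \mu_i$) and whose total weight is at most $g(1+\eps)\mu_i + g\cdot O(\eps\mu_i)$; plugging this into $w(\widehat C) > (t+1)\max_{e\in\widehat C} w(e)$ forces $g > 2k$ once $c$ is a large enough absolute constant --- this is exactly why the greedy stretch is taken to be $(1+c\eps)(2k-1)$ rather than $(1+\eps)(2k-1)$ on the nose, which is a harmless rescaling of $\eps$. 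Thus each $H_i$ is a $K_h$-minor-free graph of girth $>2k$, so Lemma~\ref{lem:size} gives $|E(H_i)| = O(h^{2/(k+1)}\cdot\polylog h)\cdot|V(H_i)|$, and hence the level-$i$ spanner edges carry total weight $O(h^{2/(k+1)}\cdot\polylog h)\cdot\mu_i\,|V(H_i)|$.

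Finally I would sum over the levels. The remaining task is the BLW amortization: since each level-$i$ cluster is a connected subgraph of $H$ and clusters across levels are nested refinements of one another, one charges $\mu_i\,|V(H_i)|$ against portions of $w(\mst(G))$ to obtain $\sum_i \mu_i\,|V(H_i)| = O(\poly(\eps^{-1}))\cdot w(\mst(G))$, and a careful accounting as in \cite{BLW17} shows the exponent can be taken so as to yield the $\eps^{-3}\log\eps^{-1}$ of the statement (one factor of $\eps^{-1}$ coming just from the level granularity). Combining with the per-level estimate and the additive $w(\mst(G))$ from $T$ then gives $w(H) = O\!\left(h^{2/(k+1)}\cdot\polylog h\cdot\eps^{-3}\log\eps^{-1}\right)\cdot w(\mst(G))$, as claimed. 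The main obstacle is precisely this last step: the density increment theorem controls the \emph{sparsity} of the per-level cluster graphs but says nothing about how $|V(H_i)|$ compares to $w(\mst(G))$, so the ``pass to a dense subgraph'' shortcut that sufficed for Theorem~\ref{thm:sparse_spanner} is unavailable, and one must genuinely reproduce the \cite{BLW17} clustering together with its amortized charging, while verifying that nothing in that argument degrades when the target girth of the cluster graphs is $2k$ rather than $3$ --- in particular that the $O(\eps\mu_i)$ cluster-diameter budget remains compatible with forcing girth $>2k$.
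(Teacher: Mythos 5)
Your proposal is correct and follows essentially the same route as the paper: build a greedy $(1+c\eps)(2k-1)$-spanner, use its weighted girth together with the low-diameter cluster invariant to show the BLW cluster graphs are $K_h$-minor-free with girth $>2k$, bound their average degree via Lemma~\ref{lem:size}, and black-box the \cite{BLW17} credit/charging scheme for the final amortization over levels. The only cosmetic differences are in preprocessing (you truncate very light edges where the paper subdivides to a unit-weight MST) and in the exact level indexing, neither of which changes the argument.
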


Our proof will directly follow the proof of Borradaile, Le, and Wulff-Nilsen~\cite{BLW17} that $K_h$-minor-free graphs have $(1+\eps)$-spanners with lightness $O\left(h \sqrt{\log h} \cdot \frac{1}{\eps^3} \log \frac{1}{\eps} \right)$. In fact, our proof will only have two differences from their proof.
\begin{itemize}
    \item \textbf{Construction Difference:}  Instead of constructing a greedy $(1+\eps)$-spanner as in~\cite{BLW17}, we will construct a greedy $(1+\eps)(2k-1)$-spanner. (Recall that a greedy spanner is a spanner constructed by the greedy strategy described in \Cref{sec: size}.)
    \item \textbf{Analysis Difference:} Instead of arguing that $n$-vertex  $K_h$-minor-free graphs have $O(nh\sqrt{\log h})$ edges, we will apply \Cref{lem:size} to argue that $n$-vertex $K_h$-minor-free graphs with girth $>2k$ have at most $O(n \cdot h^{\frac{2}{k+1}} \cdot \text{\normalfont polylog } h)$ edges. 
\end{itemize}

Let $G = (V, E)$ be an $n$-vertex $K_h$-minor-free weighted graph.
We may assume without loss of generality that for each edge $(u, v) \in E(G)$, $w((u, v)) = \dist_G(u, v)$ by removing edges from $G$ that do not satisfy this condition. 
Let $k \in \mathbb{Z}^+$ be a positive integer, and let $\eps \in \mathbb{R}^+$ be a sufficiently small real number. For ease of presentation, we will  construct a $(1+s\cdot \eps)(2k-1)$-spanner $H \subseteq G$, for a sufficiently large constant $s \in \mathbb{Z}^+$. This will not affect our bounds in \Cref{thm:light_spanner} because we can always decrease $\eps$ by a $\frac{1}{s}$ factor to remove the dependency on $s$ in our distortion bounds.


\paragraph{Construction of $H$.} 
We define our light spanner $H$ to be a greedy $(1+s \cdot \eps)(2k-1)$-spanner of $G$. Initially, let $H \gets (V, \emptyset)$. For each edge $(u, v) \in E$ in order of non-decreasing weight, if $\dist_H(u, v) > (1+s \cdot \eps)(2k-1) \cdot \dist_G(u, v)$, then add edge $(u, v)$ to $H$. After iterating through all edges in $E$, return $H$.

We will prove some basic properties of subgraph $H$ before proceeding with the lightness upper bound in the following section.

\begin{claim}[Correctness] \label{correctness:light}
    $H$ is a $(1+s \cdot \eps)(2k-1)$-spanner of $G$. 
\end{claim}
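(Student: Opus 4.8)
The statement to prove is the \textbf{Correctness} claim for the light spanner: that the greedy construction yields a $(1+s\cdot\eps)(2k-1)$-spanner of $G$.

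\textbf{Plan.} The proof follows exactly the same template as \Cref{clm:correct} for the sparse case; the only change is replacing the stretch factor $(2k-1)$ by $(1+s\cdot\eps)(2k-1)$ throughout. First I would observe that by the greedy construction, for every edge $(u,v) \in E(G)$, either $(u,v)$ was added to $H$, or at the moment $(u,v)$ was considered we had $\dist_H(u,v) \le (1+s\cdot\eps)(2k-1)\cdot \dist_G(u,v)$; since edges are only ever added to $H$, distances in $H$ only decrease over time, so this inequality persists in the final graph $H$. In either case, using that $w((u,v)) = \dist_G(u,v)$ by our preprocessing assumption, we get $\dist_H(u,v) \le (1+s\cdot\eps)(2k-1)\cdot \dist_G(u,v)$ for every edge $(u,v)\in E(G)$.

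\textbf{Extending to all pairs.} Next I would lift this per-edge guarantee to all vertex pairs $s',t' \in V(G)$ by a standard path-concatenation argument: take a shortest path $s' = v_0, v_1, \ldots, v_\ell = t'$ in $G$, apply the per-edge bound to each edge $(v_{i-1}, v_i)$, and sum using the triangle inequality in $H$. This gives
$$\dist_H(s',t') \le \sum_{i=1}^{\ell} \dist_H(v_{i-1}, v_i) \le (1+s\cdot\eps)(2k-1) \sum_{i=1}^{\ell} \dist_G(v_{i-1}, v_i) = (1+s\cdot\eps)(2k-1)\cdot \dist_G(s',t'),$$
as desired.

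\textbf{Obstacle.} There is essentially no obstacle here — this is the routine, textbook correctness argument for greedy spanners (attributable to~\cite{ADDJS93}), and it is insensitive to the precise value of the stretch parameter. The only point requiring a word of care is the monotonicity observation that $\dist_H$ can only decrease as the algorithm proceeds, which justifies transferring the inequality checked at insertion time to the final spanner; this is immediate since $H$ only gains edges. All the real work of \Cref{thm:light_spanner} lies in the subsequent lightness analysis, not in this claim.
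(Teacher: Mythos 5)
Your proposal is correct and matches the paper's approach: the paper simply notes that the claim follows from an identical argument as in \Cref{clm:correct}, which is precisely the per-edge guarantee plus triangle-inequality concatenation that you spell out. Your extra remark about the monotonicity of $\dist_H$ as edges are added is a valid (and implicitly assumed) detail, not a deviation.
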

\begin{proof}
    Follows from an identical argument as in \Cref{clm:correct}.
\end{proof}

We now define the notion of \textit{weighted girth}, which generalizes standard notion of graph girth to the weighted setting.

\begin{definition}[Normalized Cycle Weight and Weighted Girth \cite{elkin2015light}] \label{def:wtdgirth}
    We define the normalized weight of a cycle $C$ in $G$ to be
    $$
    w^*(C) := \frac{w(C)}{\max_{e \in C}w(e)}.
    $$
 The weighted girth of $G$ is defined as the minimum normalized weight $w^*(C)$ of any cycle $C$ in graph $G$. 
\end{definition}

By a standard argument in the light spanner literature (see, e.g., \cite{elkin2015light, Bodwin25}), subgraph $H$ has large weighted girth.

\begin{claim}[cf. Lemma 3.2 of \cite{Bodwin25}]
$H$ is a $K_h$-minor-free graph with weighted girth $>(1+s \cdot \eps)\cdot 2k$.  \label{clm:minorfreeh}
\end{claim}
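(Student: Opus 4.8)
The plan is to mimic the argument of Claim~\ref{clm:size} exactly, but carried out with the weighted girth notion in place of ordinary girth. First I would observe that $K_h$-minor-freeness of $H$ is immediate, since $H \subseteq G$ and $G$ is $K_h$-minor-free; taking subgraphs and deleting edges never creates a minor. So the entire content is the lower bound on the weighted girth.

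\medskip

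For that, I would take an arbitrary cycle $C$ in $H$ and show $w^*(C) > (1+s\cdot\eps)\cdot 2k$. Let $e=(u,v)$ be the edge of $C$ of maximum weight; by the greedy construction, $e$ was the last edge of $C$ to be added to $H$ (it is examined in non-decreasing weight order, and all other edges of $C$ have weight at most $w(e)$, so at the moment $e$ was considered, all of $C-e$ was already present in $H$). At that moment the greedy rule added $e$ only because $\dist_H(u,v) > (1+s\cdot\eps)(2k-1)\cdot\dist_G(u,v)$. But the path $C-e$ connects $u$ to $v$ inside $H$, so $\dist_H(u,v) \le w(C-e) = w(C) - w(e)$. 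Also $\dist_G(u,v) = w(e)$ by our normalization assumption on edge weights (every edge is a shortest path between its endpoints). Combining,
\[
w(C) - w(e) \;\ge\; \dist_H(u,v) \;>\; (1+s\cdot\eps)(2k-1)\cdot w(e),
\]
hence $w(C) > \bigl((1+s\cdot\eps)(2k-1)+1\bigr)\cdot w(e) \ge (1+s\cdot\eps)\cdot 2k \cdot w(e)$, where the last step uses $(1+s\cdot\eps)(2k-1)+1 = (1+s\cdot\eps)\cdot 2k - s\cdot\eps \cdot \ldots$; more carefully, $(1+s\eps)(2k-1)+1 \ge (1+s\eps)(2k-1) + (1+s\eps) = (1+s\eps)\cdot 2k$ since $1+s\eps \le 1$ fails --- instead just note $1 \ge 1+s\eps$ is false, so I should write $(1+s\eps)(2k-1)+1$ and compare directly to $(1+s\eps)2k = (1+s\eps)(2k-1) + (1+s\eps)$; the inequality $1 \ge 1+s\eps$ does not hold, so the clean bound is actually $w^*(C) > (1+s\eps)(2k-1)+1$, and one checks $(1+s\eps)(2k-1)+1 \ge (1+s\eps)2k$ iff $1 \ge 1+s\eps$, which is false for $s\eps>0$; therefore I will instead state the weighted girth bound as $> (1+s\eps)(2k-1)+1 > 2k$ --- but the claim asserts $>(1+s\eps)2k$, so I must reexamine.

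\medskip

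\textbf{Resolution of the arithmetic.} The correct reading: dividing by $w(e) = \max_{e'\in C} w(e')$ gives $w^*(C) > (1+s\eps)(2k-1) + 1$. For the claim to say $w^*(C) > (1+s\eps)\cdot 2k$, we would need $(1+s\eps)(2k-1)+1 \ge (1+s\eps)2k$, i.e. $1 \ge 1+s\eps$, which is false. I therefore expect the intended statement (and the one I would prove) to be that the \emph{greedy $(1+s\eps)(2k-1)$-spanner} has weighted girth $> (1+s\eps)(2k-1)+1$, and for $s\eps$ small and $k\ge 1$ this is comfortably larger than $2k$; the factor written in the claim should be interpreted up to the harmless additive slack, and this does not affect the subsequent lightness analysis (which only needs weighted girth $\Omega(k)$ together with a $(1+O(\eps))$ multiplicative guarantee). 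So the main obstacle is purely cosmetic: reconciling the off-by-$(1-s\eps)\cdot$stuff between "$(2k-1)$ stretch plus one edge" and "$2k$"; the combinatorial content (max-weight edge added last, path in $C-e$ witnesses a short detour, contradiction with the greedy threshold) is entirely routine and identical to the unweighted case in Claim~\ref{clm:size}. I would present the bound as $w^*(C) > 1 + (1+s\eps)(2k-1)$ and remark that this suffices.
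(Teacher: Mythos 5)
Your argument is correct and is essentially the paper's own: the paper proves $K_h$-minor-freeness by the same subgraph observation and then black-boxes the weighted-girth bound by citing Lemma 3.2 of \cite{Bodwin25}, which is exactly the max-weight-edge argument you spell out. Your arithmetic worry is also legitimate, and you resolve it correctly. The construction's greedy threshold is $(1+s\eps)(2k-1)$, so the argument yields weighted girth $> (1+s\eps)(2k-1)+1 = (1+s\eps)\cdot 2k - s\eps$, which is strictly smaller than the claimed $(1+s\eps)\cdot 2k$; the paper's proof elides this by describing $H$ as a greedy spanner ``with approximation parameter $(1+s\eps)\cdot 2k$,'' which does not match the stated construction. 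As you observe, the gap is harmless: the corrected bound still exceeds $2k$ (so the ordinary-girth consequence used later survives), and in \Cref{claim: graph props} the contradiction with a normalized cycle weight of at most $(1+2g\eps)\cdot 2k$ still goes through once $s$ is taken to be a slightly larger constant multiple of $g$ (one needs $s(2k-1)\ge 4gk$, so $s\ge 4g$ suffices for all $k\ge 1$). Two minor points of hygiene: when asserting that the maximum-weight edge $e$ of $C$ is the last edge of $C$ examined, handle weight ties explicitly (break ties so that $e$ comes last, without loss of generality); and the final write-up should simply state and use $w^*(C) > (1+s\eps)(2k-1)+1$, omitting the exploratory false starts.
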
 
\begin{proof}
    Since $G$ is $K_h$-minor-free, subgraph $H$ is $K_h$-minor-free as well. Additionally, by Lemma 3.2 of \cite{Bodwin25}, since $H$ is a greedy spanner of $G$ with approximation parameter $(1+s\cdot \eps) \cdot 2k$, subgraph $H$ has weighted girth $>(1+s \cdot \eps) \cdot 2k$, as claimed.  
\end{proof}

We note that if $H$ has weighted girth $>t$, then $H$ has girth $>t$ as well. Consequently, \Cref{clm:minorfreeh} implies that $H$ has girth $>2k$. 

\paragraph{Reduction to unit-weight MST edges.} In our analysis, we can assume without loss of generality that $H$ has a minimum spanning tree with unit weight edges using a standard reduction in the literature~\cite{CW18, BLW17, Bodwin25}.  This reduction can be summarized with the following lemma.

\begin{lemma}[cf. Lemma 3.4 of \cite{Bodwin25}]
    Let $H$  be an $n$-vertex $K_h$-minor-free graph graph with weighted girth $>t$ and lightness $\ell$. Then there exists a $K_h$-minor-free graph $H'$ with $O(n)$ vertices, weighted girth $>t$, and lightness $\Omega(\ell)$ such that $H$ has unit weight MST edges and every edge in $H$ has weight at least one.\footnote{The proof of Lemma 3.4 in \cite{Bodwin25} does not explicitly prove that $H'$ is $K_h$-minor-free. However, $H'$ is constructed by subdividing and reweighting edges in $H$, which does not introduce any new clique minors into $H'$.}  
\end{lemma}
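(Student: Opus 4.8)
The plan is to follow the proof of Lemma~3.4 of \cite{Bodwin25} essentially line by line. Since that argument is stated for general weighted graphs, the only genuinely new point is to check that every operation it uses — subdividing an edge, reweighting an edge, and deleting or contracting an edge — cannot create a $K_h$-minor in a graph that did not already have one. Subdivision and deletion are immediate, and contracting an edge is itself a minor operation, so $K_h$-minor-freeness is preserved at every stage; this is exactly the remark in the footnote. With that invariant in hand, everything else is the standard normalization from the light-spanner literature \cite{CW18, BLW17, Bodwin25}.

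First I would reduce to the case that $H$ is connected: weighted girth and $K_h$-minor-freeness decompose over connected components, and the lightness of $H$ is a weighted average of the lightnesses of its components, so it suffices to normalize each component and take the disjoint union. Fix a minimum spanning tree $T$ of $H$. Next I would rescale every edge weight by $1/\min_{e\in T}w(e)$; this changes neither lightness, nor weighted girth, nor minor-freeness, and afterwards every tree edge has weight $\ge 1$. The minimum-spanning-tree exchange property then upgrades this to the whole graph: for a non-tree edge $f=(u,v)$, the cycle formed by $f$ and the $T$-path between $u$ and $v$ has $f$ as a maximum-weight edge, so $w(f)$ is at least the weight of any tree edge on that path, hence $w(f)\ge 1$. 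So after this step every edge of $H$ has weight at least one.

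The core step is to make the tree edges have weight exactly one while keeping the vertex count $O(n)$. The natural move is to subdivide each tree edge of weight $w>1$ into a path of unit-weight edges (absorbing the at-most-$1$ rounding slack into one extra unit edge); this preserves weighted girth, since subdividing an edge into lighter pieces leaves the total weight of every cycle through it unchanged while not increasing the maximum edge weight along that cycle, so every normalized cycle weight can only go up. I expect the main obstacle to be the vertex count: a single tree edge of weight $w$ spawns about $w$ new vertices, so a naive subdivision introduces $\Theta(w(T))$ vertices, and $w(T)$ need not be $O(n)$ (think of a path with one enormously heavy edge). This must be dealt with first, by cutting or contracting the heavy portions of the tree as in \cite{Bodwin25} — and the delicate point is that this has to be done in a way that simultaneously (i) leaves the remaining trees with total weight $O(n)$, (ii) keeps the weighted girth above $t$, and (iii) does not decrease the lightness by more than a constant factor, which is what couples these operations and makes the step nontrivial.

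Finally I would account for the lightness by tracking how the numerator $w(H)$ and the denominator $w(\mst(\cdot))$ change through all the operations: subdivision raises the minimum-spanning-tree weight by less than $1$ per original tree edge, hence by at most a factor of $2$ since each such edge already has weight $\ge 1$, and raises $w(H)$ by a comparable amount; the heavy-cut handling removes weight from both numerator and denominator in the controlled way provided by the bookkeeping of \cite{Bodwin25}. Assembling these bounds shows that the resulting graph $H'$ has lightness $\Omega(\ell)$ while satisfying all the stated structural properties: $O(n)$ vertices, weighted girth $>t$, no $K_h$-minor, a minimum spanning tree consisting of unit-weight edges, and minimum edge weight at least one.
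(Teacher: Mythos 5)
Your proposal takes essentially the same route as the paper: the paper gives no proof beyond citing Lemma~3.4 of \cite{Bodwin25} and noting in a footnote that the construction there only subdivides and reweights edges, hence preserves $K_h$-minor-freeness, which is exactly the invariant you identify as the only new content. Your additional reconstruction of the \cite{Bodwin25} normalization (rescaling, handling heavy tree edges, subdividing to unit weights, and the lightness bookkeeping) is consistent with that black-boxed argument and does not change the approach.
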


For the remainder of this section, we assume that the MST of $H$ has weight one edges, and all edges in $H$ have weight at least one. 

\subsection{Applying the Hierarchical Clustering Analysis of \cite{BLW17}}

In this section, we will apply the analysis of \cite{BLW17} to complete the proof of \Cref{thm:light_spanner}. We begin by partitioning the edge set $E(H)$ of our spanner $H$ in the same way as  \cite{BLW17}. 

Let $\sigma_{h, k}$ denote the largest average degree of any $K_h$-minor-free graph with girth $>2k$. By \Cref{lem:size}, $\sigma_{h, k} = O(h^{\frac{2}{k+1}} \cdot \text{\normalfont polylog } h)$. 
Let $E' \subseteq E(H)$ be the edges in $H$ with weights in the range $[1, 1/\eps)$. We begin with the following simple observation.
\begin{claim}[cf. equation 3 of \cite{BLW17}]
    $w(E') = O(\sigma_{h, k} \cdot n /\eps) = O\left(\sigma_{h, k}  \cdot \eps^{-1} \cdot w(\text{\normalfont MST}) \right)$ 
\end{claim}
\begin{proof}
    We know that $w(\text{MST}) = n-1$, by our assumption that the edges of the minimum spanning tree of $G$ are unit weight. Additionally, we know that $|E'| \leq \sigma_{h, k} \cdot n$, since $H$ is a $K_h$-minor-free graph with girth $>2k$ by \Cref{clm:minorfreeh}.   
    By the definition of $E'$, we conclude that
     $w(E') = O(\sigma_{h, k} \cdot n /\eps) = O\left(\sigma_{h, k}  \cdot \eps^{-1} \cdot w(\text{\normalfont MST}) \right).$ 
\end{proof}

We now develop some of the notation and definitions used in the analysis of the light spanner in \cite{BLW17}. 
Let $E^i_j \subseteq E(H)$ be the set of edges in $H$ with weights in the range $\left[\frac{2^j}{\eps^i}, \frac{2^{j+1}}{\eps^i}\right)$ for every $i \in \mathbb{Z}^+$ and every $j \in \left[0, \left\lceil \log \frac{1}{\eps}\right\rceil \right]$. Let $J_j = \cup_{i \in \mathbb{Z}^+}E^i_j$ for each  $j \in \left[0, \left\lceil \log \frac{1}{\eps}\right\rceil \right]$.
Next we will  briefly develop the relevant details of the clustering analysis used in \cite{BLW17}. 

\paragraph{Clustering analysis of $J_j$.} 
Fix a $j \in \left[0, \left\lceil \log \frac{1}{\eps}\right\rceil \right]$. 
Let $E_j^0 \subseteq E(G)$ be the set of edges in the MST of $G$. 
For each $i \in \mathbb{Z}_{\geq 0}$, we define the subgraph $H_j^i = (V, \cup_{i\in \mathbb{Z}_{\geq 0}} E_j^i)$ of $H$.  Additionally, for each $i \in \mathbb{Z}_{\geq 0}$, the authors of \cite{BLW17} define a collection $\mathcal{C}_j^i$ of \textit{clusters} of graph $H_j^i$. Each cluster $C \in \mathcal{C}_j^i$ corresponds to a subgraph of $H_j^i$. Additionally, clusters in $\mathcal{C}_j^i$ are pairwise vertex-disjoint.  Let $\ell_j^i = \frac{2^{j+1}}{\eps^i}$. The clusters in $\mathcal{C}_j^i$ satisfy the following invariant.
\begin{center}
    \textbf{Low Diameter Invariant \normalfont{(cf. DC1 of \cite{BLW17})}.}  Let $g \in \mathbb{Z}^+$ be a positive integer constant.  For each $i \in \mathbb{Z}^+$, each cluster $C \in \mathcal{C}_j^i$ has diameter at most $g \ell_j^i$.\footnote{In \cite{BLW17}, $g$ is approximately 100.}
\end{center}

In addition to defining clusters $\mathcal{C}_j^i$, we also define  \textit{cluster graphs}  $\mathcal{K}_j^i$ as in \cite{BLW17}.

We define the vertex set of  $\mathcal{K}_j^i$ to be the clusters in $\mathcal{C}_j^i$, so that $V(\mathcal{K}_j^i) = \mathcal{C}_j^i$. For each edge $(u, v) \in E_j^{i+1}$ where $u \in V(C_1)$ and $v \in V(C_2)$ for clusters $C_1, C_2 \in \mathcal{C}_j^i$, we add the edge $(C_1, C_2)$ to graph $\mathcal{K}_j^i$. We make the following observation.

\begin{claim}[cf. Observation 3 of \cite{BLW17}] \label{clm : minor}
     $\mathcal{K}_j^i$ is a simple graph. Moreover, $\mathcal{K}_j^i$ is a minor of $H_j^{i+1}$. 
\end{claim}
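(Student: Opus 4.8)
The plan is to reproduce the reasoning of Observation~3 of~\cite{BLW17}, tracking the fact that here $H$ has large \emph{weighted} girth (\Cref{clm:minorfreeh}) rather than being a $(1+\eps)$-spanner. I would first establish simplicity of $\mathcal{K}_j^i$, which is the substantive step, and then deduce the minor statement, which is bookkeeping.

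Since the definition of $\mathcal{K}_j^i$ already identifies parallel edges, simplicity reduces to the claim that no edge of $E_j^{i+1}$ has both endpoints in one cluster $C \in \mathcal{C}_j^i$. I would argue by contradiction. If $(u,v) \in E_j^{i+1}$ with $u,v \in V(C)$ (note $u \neq v$ since $G$ is simple), the Low Diameter Invariant provides a $u$-$v$ path $P$ inside $C$ of weight at most $g\ell_j^i$. Every edge of $P$ lies in $H_j^i$ and hence has weight below $\ell_j^i = 2^{j+1}/\eps^i$, whereas $w(u,v) \geq 2^j/\eps^{i+1} = \ell_j^i/(2\eps)$, which exceeds $g\ell_j^i$ as soon as $\eps < 1/(2g)$. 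In particular $(u,v) \notin E(P)$, so $P$ together with $(u,v)$ is a simple cycle $C'$ of $H$ whose strictly heaviest edge is $(u,v)$, giving
$$w^*(C') = \frac{w(P) + w(u,v)}{w(u,v)} \leq 1 + \frac{g\ell_j^i}{w(u,v)} \leq 1 + 2g\eps < 2.$$
This contradicts \Cref{clm:minorfreeh}, which asserts that $H$ has weighted girth larger than $2k(1+s\eps) \geq 2$. Hence $\mathcal{K}_j^i$ is simple. (For $i = 0$ the level-$0$ clusters of~\cite{BLW17} are trivial and the statement is immediate.)

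For the minor statement, recall from the construction of~\cite{BLW17} that each $C \in \mathcal{C}_j^i$ is a connected subgraph of $H_j^i \subseteq H_j^{i+1}$ and that distinct clusters are vertex-disjoint. I would obtain $\mathcal{K}_j^i$ from $H_j^{i+1}$ by: deleting every vertex contained in no cluster of $\mathcal{C}_j^i$; contracting each connected cluster $C$ to a single vertex; and finally deleting all loops and parallel edges together with every remaining edge not witnessed by an edge of $E_j^{i+1}$. The resulting graph has vertex set $\mathcal{C}_j^i$, and by the simplicity just proved together with the definition of $\mathcal{K}_j^i$, its edges are precisely the pairs $(C_1,C_2)$ of distinct clusters joined by an edge of $E_j^{i+1}$; that is, it is exactly $\mathcal{K}_j^i$. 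Since vertex deletion, edge deletion, and edge contraction are the minor operations, $\mathcal{K}_j^i$ is a minor of $H_j^{i+1}$.

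The only delicate point is the self-loop argument: one has to check that the long edge $(u,v)$ strictly dominates every edge of the cycle it closes, so that the normalized weight of that cycle falls strictly below the weighted girth guaranteed by \Cref{clm:minorfreeh}. Both requirements hold because $\eps$ is a sufficiently small constant relative to the diameter constant $g$ of the Low Diameter Invariant. Everything else --- connectedness and vertex-disjointness of the clusters, and the description of $E(\mathcal{K}_j^i)$ --- is inherited directly from the cluster construction of~\cite{BLW17}.
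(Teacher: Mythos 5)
Your self-loop argument and your description of the minor operations are fine and match the paper. The problem is that you have defined away the substantive half of the claim. You write that ``the definition of $\mathcal{K}_j^i$ already identifies parallel edges,'' so that simplicity reduces to excluding self-loops. But the content of the claim --- and the bulk of the paper's proof --- is precisely that no two \emph{distinct} edges of $E_j^{i+1}$ join the same pair of clusters $C_1, C_2 \in \mathcal{C}_j^i$. This fact cannot be obtained by fiat: it is what later lets one bound $|E_j^{i+1}|$ by the number of edges of the simple graph $\mathcal{K}_j^i$, i.e.\ by $\sigma_{h,k} \cdot |\mathcal{C}_j^i|$, which is the whole point of introducing the cluster graph in the charging scheme. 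If you collapse parallel edges by definition, the claim becomes true but useless, and the lightness analysis breaks.

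The paper's argument for this step does not go through weighted girth at all; it goes through the greedy construction. Suppose $(u,v), (x,y) \in E_j^{i+1}$ both join $C_1$ to $C_2$, with $w(u,v) \le w(x,y)$. Every edge of $C_1 \cup C_2 \cup \{(u,v)\}$ has weight less than $w(x,y)$, so all of these edges were already present in $H$ when the greedy algorithm examined $(x,y)$. By the Low Diameter Invariant, this gives a $u$--$x$ path and a $v$--$y$ path each of weight at most $g\ell_j^i$, hence
$\dist_H(x,y) \le w(u,v) + 2g\ell_j^i \le (1+s\eps)\,w(x,y)$
once $s$ is large relative to $g$ (using $w(x,y) \ge \ell_j^i/(2\eps)$). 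The greedy rule would therefore have rejected $(x,y)$, a contradiction. You should supply this step (or an equivalent one; for instance, one can also close the two parallel edges into a cycle of normalized weight at most $2 + O(g\eps)$ and compare against the weighted girth $>(1+s\eps)\cdot 2k$, though this needs a little care when $k=1$). As written, the proposal is missing the key idea.
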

\begin{proof}
    Every edge in $E_j^{i+1}$ has weight at least $\frac{2^j}{\eps^{i+1}} = \ell_j^i \cdot \frac{1}{2\eps}$, whereas every edge in $H_j^i$ has weight less than $\ell_j^i$. If $\eps$ is sufficiently smaller than $\frac{1}{g}$, then by the low diameter invariant, graph $\mathcal{K}_j^i$ does not contain any self-loops (since any self-loop in  $\mathcal{K}_j^i$ would contradict the fact that $H$ has weighted girth $>1+\eps$).
    
    Suppose towards contradiction that graph $\mathcal{K}_j^i$ has two parallel edges between two clusters $C_1, C_2 \in \mathcal{C}_j^i$, with corresponding edges $(u, v), (x, y) \in C_1 \times C_2$ in $E_j^{i+1}$.
      Let $w((u, v)) \leq w((x, y))$, without loss of generality.
    Let $Q = C_1 \cup C_2 + (u, v)$ be the graph obtained by unioning clusters $C_1$ and $C_2$ and the intercluster edge $(u, v)$. We observe that
\begin{align*}
    \dist_Q(x, y) &  \le \dist_Q(x, u) + w((u, v)) + \dist_Q(v, y)  & \text{ by the triangle inequality,} \\
   &   \le w((u, v)) + 2g\ell_j^i  & \text{ by the low diameter invariant,} \\
   &   \le \dist_G(x, y) + 2g\ell_j^i  & \text{ since $w((u, v)) \le w((x, y)) =\dist_G(x, y)$,} \\
   & \le (1+s \cdot \eps) \dist_G(x, y)  & \text{ by letting $s$ be sufficiently larger than $g$.}
\end{align*}
In our greedy construction of spanner $H \subseteq G$, every edge in $Q$ is added to $H$ before edge $(x, y)$ (because all edges in $Q$ have weight less than $w((x, y))$. However, the above sequence of inequalities imply that edge $(x, y)$ is not added to $H$ in our greedy $(1+s \cdot \eps)(2k-1)$-spanner algorithm. We conclude that  $\mathcal{K}_j^i$ is a simple graph.

We now finish the proof by showing that $\mathcal{K}_j^i$ is a minor of $H_j^{i+1}$. Note that clusters in $\mathcal{C}_j^i$ are connected subgraphs of $H_j^{i+1}$ by the low diameter invariant, and clusters in $\mathcal{C}_j^i$ are pairwise vertex-disjoint. Then if we contract each cluster $C$ in $H_j^{i+1}$, the resulting graph will contain $\mathcal{K}_j^i$ as a subgraph.   
\end{proof}

We now prove that the cluster graph $\mathcal{K}_j^i$ excludes a large minor and has large girth. 

\begin{claim}\label{claim: graph props}
    Graph $\mathcal{K}_j^i$ is $K_h$-minor-free and has girth $> 2k$. 
\end{claim}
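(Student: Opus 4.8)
The minor-freeness half I would dispose of first, since it is immediate: by \Cref{clm : minor}, $\mathcal{K}_j^i$ is a minor of $H_j^{i+1}$, which is a subgraph of $H$; since $H$ is $K_h$-minor-free by \Cref{clm:minorfreeh} and being $K_h$-minor-free is inherited by minors, $\mathcal{K}_j^i$ is $K_h$-minor-free. All the work is in the girth bound, and the plan there is to mimic the girth argument of \Cref{clm:size}, lifting a hypothetical short cycle of $\mathcal{K}_j^i$ to a short detour in $H$ around one of its edges and contradicting the greedy stopping rule (now with clusters playing the role of vertices).

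Concretely, I would suppose for contradiction that $\mathcal{K}_j^i$ has a cycle on clusters $C_1C_2\cdots C_mC_1$ with $m\le 2k$; since $\mathcal{K}_j^i$ is simple (\Cref{clm : minor}) these clusters are distinct, and hence by construction pairwise vertex-disjoint. Each cycle edge between $C_a$ and $C_{a+1}$ (indices mod $m$) is witnessed by an edge $e_a\in E_j^{i+1}$ with one endpoint in $C_a$ and one in $C_{a+1}$. I would relabel the cycle so that $e_m$ is the \emph{last} of $e_1,\dots,e_m$ to be examined by the greedy algorithm; this both handles weight ties and guarantees $w(e_a)\le w(e_m)$ for all $a$. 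Write $e_m=(x,y)$ with $x\in C_m$ and $y\in C_1$. The detour is the $x$--$y$ walk $P$ in $H$ obtained by concatenating, inside each cluster $C_a$, a path between the two endpoints of its incident cycle edges, together with the edges $e_1,\dots,e_{m-1}$.

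The crux is to bound $w(P)$ and to check that $P$ is already present in the partial spanner $\hat H$ at the instant $e_m$ is examined. For the length: by the Low Diameter Invariant each of the $m$ intra-cluster pieces has length at most $g\ell_j^i$, and $w(e_a)\le w(e_m)$, so $w(P)\le (m-1)w(e_m)+mg\ell_j^i$; since every edge of $E_j^{i+1}$ has weight at least $2^j/\eps^{i+1}$ while $\ell_j^i=2^{j+1}/\eps^i$, we get $\ell_j^i\le 2\eps\,w(e_m)$, hence $w(P)\le (m-1+2mg\eps)\,w(e_m)$. For presence: the same weight separation shows (for $\eps<\tfrac12$) that every intra-cluster edge, lying in $H_j^i\subseteq H$, has weight strictly below $w(e_m)$ and is thus processed strictly before $e_m$; and each $e_a$ with $a<m$ lies in $H$ and is processed before $e_m$ by our choice of labeling. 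So all edges of $P$ are in $\hat H$, giving $\dist_{\hat H}(x,y)\le w(P)\le (m-1+2mg\eps)w(e_m)$. Since $e_m$ was nonetheless added, the greedy rule forces $(1+s\eps)(2k-1)w(e_m)<\dist_{\hat H}(x,y)\le (m-1+2mg\eps)w(e_m)$, i.e.\ $m-1+2mg\eps>(1+s\eps)(2k-1)$; with $m\le 2k$ this collapses to $4kg>s(2k-1)\ge sk$, i.e.\ $s<4g$, contradicting the choice of $s$ as a sufficiently large multiple of $g$ (the same choice already used in \Cref{clm : minor}). Hence every cycle of $\mathcal{K}_j^i$ has more than $2k$ edges, so $\mathcal{K}_j^i$ has girth $>2k$.

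The main obstacle I anticipate is precisely the bookkeeping in the previous paragraph: one must be confident that the \emph{entire} lifted walk $P$ — both its inter-cluster edges $e_1,\dots,e_{m-1}$ and all of its intra-cluster edges — lies in the partial spanner at the moment $e_m$ is considered. This is what dictates the two small design choices, namely relabeling so that $e_m$ is processed last among the witnessing edges, and exploiting the geometric gap $\ell_j^i\le 2\eps\,w(e_m)$ (valid for $\eps<\tfrac12$) to push every intra-cluster edge strictly earlier in the greedy order. Once that is in place, the remainder is just the triangle inequality, the Low Diameter Invariant, and taking $s$ large relative to $g$ — exactly the ingredients already in play in \Cref{clm : minor}.
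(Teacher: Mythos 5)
Your proof is correct and is essentially the paper's argument: both halves match, and the girth bound is obtained in both cases by lifting a hypothetical short cluster-cycle to a closed walk in $H$ via the Low Diameter Invariant and the weight separation $\ell_j^i \le 2\eps\, w(e)$ for $e \in E_j^{i+1}$, then taking $s$ large relative to $g$. The only difference is one of packaging: the paper derives the contradiction from the weighted-girth bound of \Cref{clm:minorfreeh}, whereas you unroll that bound and contradict the greedy stopping rule directly (hence your extra bookkeeping about which edges are already in the partial spanner), which is a valid, self-contained rendering of the same argument.
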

\begin{proof}
    By \Cref{clm : minor}, $\mathcal{K}_j^i$  is a minor of $H_j^{i+1} \subseteq G$. Since $G$ is $K_h$-minor-free,  $\mathcal{K}_j^i$ is $K_h$-minor-free too. 
    What remains is to show that cluster graph $\mathcal{K}_j^i$ has girth $>2k$. We will need the following observations:
    \begin{itemize}
        \item Graph $H_j^{i+1}$ has weighted girth  $> (1+s \cdot \eps) \cdot 2k$ by \Cref{clm:minorfreeh}.
        \item Every cluster $C \in \mathcal{C}_j^i$ has diameter at most $\text{diam}(C) \le g\ell^i_j$ by the low diameter invariant.
        \item Every edge  $e \in E_j^{i+1}$ has weight at least $w(e) \ge \eps^{-1} / 2  \cdot \ell^i_j$ by the definition of $E_j^{i+1}$.  
    \end{itemize} 
    Now consider a cycle $Q$ in cluster graph $\mathcal{K}_j^i$. Let $C_1, \dots, C_{|Q|} \in \mathcal{C}_j^i$ denote the $|Q|$ clusters in $\mathcal{C}_j^i$ that make up the vertices of cycle $Q$ in $\mathcal{K}_j^i$.  Let $(u_1, v_1), \dots, (u_{|Q|}, v_{|Q|}) \in E_j^{i+1}$ denote the $|Q|$ edges in $E_j^{i+1}$ such that $(u_i, v_i) \in V(C_i) \times V(C_{i+1})$ for each $i \in [1, |Q|]$. Additionally, for each $i \in [1, |Q|]$, let $P_i$ be a $(v_i, u_{i+1})$-shortest path in cluster $C_i$.

    Now suppose towards contradiction that cycle $Q$ has length at most $|Q| \leq 2k$. Then we can define the following cycle $Q'$ in graph $H_j^{i+1}$:
    $$
    Q' = (u_1, v_1) \circ P_1 \circ (u_2, v_2) \circ \dots \circ P_{|Q|-1} \circ (u_{|Q|}, v_{|Q|}) \circ P_{|Q|}.
    $$
    Since each path $P_i$ has weight $w(P_i) \leq \text{diam}(C_{i+1}) \le g\ell^i_j$ and each edge $(u_i, v_i)$ has weight at least $w((u_i, v_i)) \ge \eps^{-1}/2 \cdot \ell^i_j$, the normalized weight of cycle $Q$ is at most  
    $$
    \frac{w(Q')}{\max_{e \in Q'}w(e)} \leq \frac{ |Q| \cdot \left( g\ell^i_j + \max_{i\in[1, |Q|]} w((u_i, v_i)) \right) }{\max_{i\in[1, |Q|]} w((u_i, v_i))} \leq |Q| \cdot \left( 1+ \frac{g\ell^i_j}{\eps^{-1}/2 \cdot \ell^i_j} \right) \le (1+2g \cdot \eps)2k. 
    $$
    If we choose parameter $s$ in our construction of $H$ to be $s \ge 2g$, then this contradicts \Cref{clm:minorfreeh}, which states that  $H$ has weighted girth  $> (1+s \cdot \eps) \cdot 2k$. 
\end{proof}

\paragraph{Applying the lightness argument of \cite{BLW17}.} 

For each $j \in \left[0, \left\lceil \log \frac{1}{\eps}\right\rceil \right]$, the authors of \cite{BLW17} upper bound the lightness of edges in $E_j^{i}$ for all $i \in \mathbb{Z}^+$ using a careful charging scheme.
We will black-box their charging scheme using Lemma \ref{lem:blackbox} to follow.

At level $i$, the clusters in $\mathcal{C}_j^i$ are clustered together using edges in $E_j^{i+1}$ to form the level $i+1$ clusters $\mathcal{C}_j^{i+1}$. During level $i$, the authors of \cite{BLW17} need to carefully pay for the weights of edges in $E_j^{i+1}$ using credits stored in each cluster in $\mathcal{C}_j^i$, while maintaining enough excess credits to pay for edges in higher levels as well. A key observation used in their proof is that cluster graph $\mathcal{K}_j^i$ is $K_h$-minor-free, and therefore on average each cluster in $\mathcal{C}_j^i$ is incident to $O(h \sqrt{\log h})$ edges in $\mathcal{K}_j^i$.  This charging scheme analysis yields the following lemma, which is implicit in the proof of Lemma 2 in \cite{BLW17}. 

\begin{lemma}[cf. Lemma 2 of \cite{BLW17}] \label{lem:blackbox}
    Let $d \in \mathbb{R}^+$ be the largest average degree of any cluster graph $\mathcal{K}_j^i$ for $j \in \left[0, \left\lceil \log \frac{1}{\eps}\right\rceil \right]$ and $i \in \mathbb{Z}^+$.\footnote{Although cluster graph $\mathcal{K}_i^j$ is defined for all $i \in \mathbb{Z}^+$,  there are only finitely many distinct cluster graphs in the hierarchy. Thus, $d$ is well-defined.}    There exists a set of edges $B \subseteq E(H)$ such that $w(B) = O\left( \frac{1}{\eps^2} w(\text{\normalfont MST})  \right)$ and for every  $j \in \left[0, \left\lceil \log \frac{1}{\eps}\right\rceil \right]$,
    $$
    w\left(J_j - B \right) \le O\left( \frac{d}{\eps^3} \right) w(\text{\normalfont MST}).
    $$
\end{lemma}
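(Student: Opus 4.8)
The plan is to replay the credit-based charging argument from the proof of Lemma 2 of \cite{BLW17} essentially verbatim, keeping track of the single place where a bound on the average degree of the cluster graphs is invoked. Recall the setup developed above: for a fixed $j$ we have the nested family of clusterings $\mathcal{C}_j^0, \mathcal{C}_j^1, \dots$, where $\mathcal{C}_j^0$ consists of the MST edges together with the isolated vertices, each $\mathcal{C}_j^i$ partitions $V$ into vertex-disjoint connected subgraphs of $H_j^i$ satisfying the low diameter invariant $\diam(C) \le g\ell_j^i$, and $\mathcal{C}_j^{i+1}$ is obtained from $\mathcal{C}_j^i$ by merging the clusters joined by the edges of $E_j^{i+1}$ and then, where necessary, cutting an over-large merged cluster into pieces so that the diameter invariant is restored at scale $\ell_j^{i+1}$. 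As in \cite{BLW17}, I would attach a credit to each cluster, generated from the MST according to their bookkeeping, and maintain the invariant that these credits suffice to pay for the inter-cluster edges at the next level.

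Then I would carry out the charging exactly as in \cite{BLW17}. By \Cref{clm : minor}, every edge of $E_j^{i+1}$ is an inter-cluster edge of $\mathcal{C}_j^i$ and the cluster graph $\mathcal{K}_j^i$ formed by these edges is simple, so $|E_j^{i+1}| = |E(\mathcal{K}_j^i)| \le d \cdot |\mathcal{C}_j^i|$, where $d$ upper-bounds the average degree of every $\mathcal{K}_j^i$; since each edge of $E_j^{i+1}$ has weight below $\ell_j^{i+1}$, this gives $w(E_j^{i+1}) \le d \cdot |\mathcal{C}_j^i| \cdot \ell_j^{i+1}$. The cluster credits are then spent to pay for these edges, and the only edges the scheme cannot absorb this way --- in \cite{BLW17}, the edges deleted in order to keep cluster diameters bounded --- are collected into the set $B$, whose total weight is bounded directly against $w(\text{\normalfont MST})$ with no reference to the cluster graphs, giving $w(B) = O(\eps^{-2}\, w(\text{\normalfont MST}))$. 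Summing the residual charge over all levels, $w(J_j - B) = \sum_{i \ge 1} w(E_j^i - B)$, with the average-degree bound $d$ playing the role that $O(h\sqrt{\log h})$ plays in \cite{BLW17} and with the $\eps$-powers unchanged from their analysis, yields $w(J_j - B) = O(d\, \eps^{-3})\, w(\text{\normalfont MST})$, which is the claimed bound.

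\textbf{Main obstacle.} The hard part will not be any new idea --- the argument is entirely a matter of verifying that the charging scheme of \cite{BLW17} is insensitive to the substitution $O(h\sqrt{\log h}) \mapsto d$. Concretely, I would need to check that (i) the cluster construction and the low diameter invariant do not depend on $k$ (they concern only diameters and edge-weight scales, so they do not); (ii) the definition of $B$ and the bound $w(B) = O(\eps^{-2} w(\text{\normalfont MST}))$ are likewise independent of $d$; and (iii) every step of \cite{BLW17} that uses ``$K_h$-minor-free graphs have average degree $O(h\sqrt{\log h})$'' is in fact applied to one of the cluster graphs $\mathcal{K}_j^i$, so that the weaker hypothesis ``$\mathcal{K}_j^i$ has average degree at most $d$'' suffices in its place. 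A secondary, minor point is that our greedy spanner has stretch $(1+s\eps)(2k-1)$ and hence weighted girth $> (1+s\eps)\cdot 2k$ rather than the $> 1+\eps$ used in \cite{BLW17}; this only strengthens the hypotheses their arguments require, so the charging goes through unchanged, and the larger girth is used only externally --- through \Cref{claim: graph props} and \Cref{lem:size} --- to bound the quantity $d$ itself.
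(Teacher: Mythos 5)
Your proposal is correct and matches the paper's treatment: the paper does not reprove this lemma but explicitly black-boxes it as implicit in the proof of Lemma 2 of \cite{BLW17}, relying on exactly the observation you make --- that the density of $K_h$-minor-free graphs enters their charging scheme only through the average degree of the cluster graphs $\mathcal{K}_j^i$, so the substitution $O(h\sqrt{\log h}) \mapsto d$ goes through with the $\eps$-dependence unchanged. Your checklist (i)--(iii) is precisely the verification the paper implicitly relies on, so there is nothing further to add.
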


We can combine \Cref{lem:blackbox} with our upper bound on the sparsity of minor-free graphs with large girth in \Cref{claim: graph props} to prove our light spanner upper bound.  

\lightspanner*
\begin{proof}
 By \Cref{claim: graph props}, every cluster graph $\mathcal{K}_j^i$ is $K_h$-minor-free and has girth $>2k$. Then since $\sigma_{h, k}$ is the largest average degree of any $K_h$-minor-free graph with girth $>2k$, every cluster graph $\mathcal{K}_j^i$ has average degree at most $\sigma_{h, k}$. 
 
 Then by \Cref{lem:blackbox},  there exists a set of edges $B \subseteq E(H)$ such that
    $$
    w(H) = w(E') + w(B) +  \sum_{ j \in \left[0, \left\lceil \log \frac{1}{\eps}\right\rceil \right]} w(J_j - B) = O\left( \frac{\sigma_{h, k}}{\eps^3}  \log \frac{1}{\eps} \right) w(\text{MST}).
    $$
    Plugging in our upper bound of $\sigma_{h, k} = O\left(h^{\frac{2}{k+1}} \cdot \text{\normalfont polylog } h\right)$ on $\sigma_{h, k}$ from \Cref{lem:size} completes the proof of \Cref{thm:light_spanner}. 
\end{proof}

\bibliographystyle{alpha}
\bibliography{REF}

\end{document}